\newtheorem{lemma}{Lemma}
\newcommand{\BO}[1]{{\mathcal O}\left(#1\right)}
\newcommand{\BOM}[1]{\Omega\left(#1\right)}
\newcommand{\sort}[1]{\text{sort}\left({#1}\right)}
\newcommand{\E}[1]{\mathbb{E}\left[#1\right]}
\newcommand{\EC}[2]{\mathbb{E}\left[#1 \vert #2\right]}
\newcommand{\V}[1]{\text{Var}\left(#1\right)}
\renewcommand{\Pr}[1]{\text{Pr}\left(#1\right)}
\title{The Input/Output Complexity of Triangle Enumeration}
\date{}
\author[1]{Rasmus Pagh}
\author[2]{Francesco Silvestri\thanks{This work was done while Silvestri was visiting the IT University of Copenhagen.
}}
\affil[1]{IT University of Copenhagen, \texttt{pagh@itu.dk}}
\affil[2]{University of Padova, \texttt{silvest1@dei.unipd.it}}
\begin{document}
\maketitle

\begin{abstract}
We consider the well-known problem of enumerating all triangles of an undirected graph. Our focus is on determining the input/output (I/O) complexity of this problem. Let $E$ be the number of edges, $M<E$ the size of internal memory, and $B$ the block size. The best results obtained previously are sort$(E^{3/2})$ I/Os (Dementiev, PhD thesis 2006) and $\BO{E^2/(MB)}$ I/Os (Hu et al., SIGMOD 2013), where sort$(n)$ denotes the number of I/Os for sorting $n$ items. We improve the I/O complexity to $\BO{E^{3/2}/(\sqrt{M} B)}$ expected I/Os, which improves the previous bounds by a factor $\min(\sqrt{E/M},\sqrt{M})$. Our algorithm is cache-oblivious and also I/O optimal: We show that any algorithm enumerating $t$ distinct triangles must {\em always\/} use $\BOM{t/(\sqrt{M} B)}$ I/Os, and there are graphs for which $t=\BOM{E^{3/2}}$. Finally, we give a deterministic cache-aware algorithm using $\BO{E^{3/2}/(\sqrt{M} B)}$ I/Os assuming $M\geq E^\varepsilon$ for a constant $\varepsilon > 0$. Our results are based on a new color coding technique, which may be of independent interest. 
\end{abstract}

\section{Introduction}
Many kinds of information can be naturally represented as graphs, and algorithms for processing information in this format often need to consider small subgraphs such as triangles.
Examples of applications in which we need to enumerate all triangles in a graph are found in~\cite{WellesDC10} for studying social processes in networks,~\cite{BerryHLP11} for community detection,~\cite{FudosH97} for solving
systems of geometric constraints. See~\cite{HuTC13,Berry14} for further discussion and examples.

A classical example from database theory is the following.
A database is created to store information on salespeople and the products they sell.
Each product is characterized by a brand and a product type, e.g.~``ACME vacuum cleaner'', where each product type may be available in many brands.
An obvious representation in a relational database is a single table {\tt Sells(salesperson,brand,productType)}.
However, suppose that a salesperson is characterized by a set $B$ of brands and a set $T$ of product types, and she sells all available products in $B\times T$.
Then {\tt Sells} is not in 5th normal form\footnote{The 5th normal form reduces redundancy in relational databases recording multi-valued facts. Intuitively, a table is in 5th normal form if it cannot be reconstructed from smaller tables using equijoins~\cite{Kent83}.}, so to avoid anomalies it should be decomposed into three tables, one for each pair of attributes, whose natural join is equal to {\tt Sells}.
Viewing each table as a bipartite graph with vertices corresponding to attribute values, computing {\tt Sells} is exactly the task of enumerating all triangles in the union of these three graphs.
In other words, to be able to compute the join of three tables that are in 5th normal form we must solve the triangle enumeration problem.
Surprisingly, it seems that the challenge of doing this in an I/O-efficient way was not addressed in the database community until the SIGMOD 2013 paper of Hu, Tao and Chung~\cite{HuTC13}, though we note that a pipelined nested loop join does a good job when the edge set almost fits in memory.

In the context of I/O-efficient algorithms it is natural to not require the {\em listing\/} of all triangles to external memory.
Rather, we simply require that the algorithm {\em enumerates\/} all triangles.
More precisely, it suffices that for each triangle $\{v_1,v_2,v_3\}$ the algorithm makes exactly one call to a procedure {\tt emit}$(\cdot,\cdot,\cdot)$ with parameters $(v_1,v_2,v_3)$ at a point of time during the computation where all edges $\{v_1,v_2\}$, $\{v_2,v_3\}$, and $\{v_1,v_3\}$ are present in internal memory. 
Focusing on enumeration rather than listing is in line with the way the I/O complexity of algorithms in database systems is usually accounted for, where pipelining of operations may mean that it is not necessary to materialize an intermediate result.
The same is true for other applications in which enumerating all triangles is a preprocessing step.
Since each triangle is emitted at exactly one point in time there is no need for a separate duplicate elimination step.

The algorithm for triangle listing in \cite{HuTC13} can be easily adapted to solve the enumeration problem.
We recently learned that Hu et al.~also make this observation in the journal version~\cite{HuTC14} of their SIGMOD 2013 paper.
However, the algorithm requires $\BO{E^2/(MB)}$ I/Os for enumerating all triangles. 
We note that the I/O complexity corresponds to $E/M$ scans of the edge set. 
The main message of this paper is that it is possible to improve this I/O complexity by a factor $\sqrt{E/M}$, which is significant whenever the data size is much larger than internal memory. 
Our contributions are the following:
\begin{itemize}
\item We present a randomized triangle enumeration algorithm that is cache-oblivious~\cite{FrigoLPR12}, and improves the I/O complexity of previous algorithms by an expected factor $\min\left(\sqrt{E/M},\sqrt{M}\right)$. This is significant for large graphs in which $E\gg M \gg 1$.
\item We present a deterministic and cache-aware triangle enumeration algorithm with the same asymptotic I/O complexity under the mild assumption $M\geq \sqrt{E}$.
\item We show that the number of I/Os of our algorithms is within a constant factor of the best possible under the assumption that each triangle output must be ``witnessed'' by edges stored in internal memory. 
A similar result has been independently achieved in~\cite{HuTC14}.
\end{itemize}
Formal statements can be found in Section~\ref{sec:ourresults}.

\subsection{State of the art}

Algorithms for memory hierarchies, in particular in the external memory model~\cite{AggarwalV88}, have been widely investigated in the last years and we refer to the excellent survey by Vitter~\cite{Vitter08} for a complete overview of the state of the art. 
Cache-oblivious algorithms have been introduced by Frigo et al.~\cite{FrigoLPR99,FrigoLPR12} and are algorithms that do not use in their specifications the parameters describing the memory hierarchy, but still exhibit an optimal or quasi-optimal I/O complexity.

The triangle listing and enumeration problems are equivalent in flat memory (e.g., the RAM model) since the cost of writing in memory all the enumerated triangles is asymptotically no larger than the cost of triangle generation. 
However, this is not the case when external storage is used: 
the cost of writing triangles can significantly increase the I/O complexity in graphs with a large number of triangles. 
  
Several previous papers have considered the problem of listing triangles in the external memory model. 
Before considering these papers, we observe that since triangle enumeration can be expressed as a natural join of three relations, it is possible to use two block-nested loop joins (in a pipelined fashion) to solve the problem incurring $\BO{(E/M)^2 E/B} = \BO{E^3/(M^2 B)}$ I/Os.

The first two works dealing explicitly with triangle listing in external memory are due to Menegola~\cite{Menegola10} and Dementiev~\cite{Dementiev2007}, which give algorithms using, respectively, $\BO{E+E^{1.5}/B}$ and $\BO{(E^{1.5}/B) \log_{M/B} (E/B)}$ I/Os. 
Both algorithms incur a large number of I/Os and have weak temporal locality of reference since their bounds have at most a logarithmic dependency on the memory size $M$.
Using graph partitioning ideas, Chu and Cheng~\cite{ChuC12} improved the bound to $\BO{E^2/(MB)+t/B}$ for a class of graphs, where $t$ is the number of returned triangles. 
This bound improves the previous ones as soon as $M=\BOM{\sqrt{E}}$ and is the first to be output sensitive. 
The class of graphs handled are those for which each subgraph generated by the partitioning fits in memory.
Hu, Tao and Chung~\cite{HuTC13} provided an algorithm reaching the same bound, using very different techniques, 
working for {\em arbitrary\/} graphs.
This improves the algorithm based on block-nested loop joins by a factor $E/M$.
It is argued in~\cite{HuTC13} (and elaborated in the full version~\cite{HuTC14}) that their algorithm is near-optimal in the sense that it cannot be significantly improved for {\em all\/} combinations of $E$ and $M$.
However, the argument leaves open the question of whether a significant improvement can be obtained when $E \gg M$.
In contrast, we show matching upper and lower bounds for all combinations of $E$ and $M$. 

As mentioned above, a lower bound on the I/O complexity of triangle enumeration has independently been shown in the unpublished journal article~\cite{HuTC14}.
Although the main interest of the paper is in the listing problem, it provides a $\BOM{E^{3/2}/(\sqrt{M}B) + E/B}$ lower bound on the I/O complexity that applies also to the enumeration problem. 
However, in this paper we extend this result to be a best-case lower bound and to be output sensitive, using a shorter and arguably simpler argument. 
That is, we show that the I/O complexity of any algorithm for enumerating $t$ triangles is $\BOM{t/(\sqrt{M}B) + t^{2/3}/B}$. Both bounds apply to algorithms that, intuitively, manage edges and vertices as atomic information.

We recall that triangle listing has been widely studied in other models (there is no distinction between enumeration and listing in these works). 
The relations between listing and other problems have been widely investigated, see for instance  Williams and Williams~\cite{WilliamsW10} for a reduction to  matrix multiplication, and Jafargholi and Viola~\cite{JafargholiV13} for 3SUM/3XOR. 
Parallel algorithms for triangle listing have been addressed in the MapReduce framework by Afrati et al.~\cite{AfratiSSU13}, and by Suri and Vassilvitskii~\cite{SuriV11}. 
Triangle listing in certain classes of random graphs has been addressed recently by Berry et al.~\cite{Berry14} to explain the empirically good behavior of simple triangle listing algorithms. 
For the related problem of counting the number of triangles in a graph, we refer to~\cite{KolountzakisMPT12} and references therein.

\subsection{Our results}\label{sec:ourresults}
Our first main result is a cache-oblivious algorithm for triangle enumeration. 
In a cache-oblivious algorithm  no variables dependent on hardware parameters, such as internal memory size and block length, need to be tuned to achieve optimality (or quasi optimality).
The cache-oblivious algorithm  is inspired by a recursive approach proposed by Jafargholi and Viola~\cite{JafargholiV13}, in the context of output sensitive triangle listing in the RAM model. We prove the following claim.
\begin{restatable}{thm}{thcacheoblivious}
\label{th:cobl}
Assume $E\geq M$. Then there exists a cache-oblivious randomized algorithm for triangle enumeration using $\BO{\frac{E^{3/2}}{\sqrt{M}B}}$ I/Os in expectation  and $\BO{E}$ words on disk.
\end{restatable}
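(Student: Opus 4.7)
The plan is to run a recursive randomized color-coding that, in effect, simulates a one-shot $\sqrt{E/M}$-coloring through $O(\log(E/M))$ rounds of random $2$-coloring, without the algorithm ever referencing $M$ or $B$. At the top level I would assign each vertex a uniform random bit (or, equivalently, a uniform $[0,1]$ label whose $d$-th bit is used at recursion depth $d$), splitting the vertex set as $V_0 \cup V_1$ and grouping triangles by the multiset of colors on their three vertices; this decomposes the problem into four subproblems indexed by $\{0,0,0\}, \{0,0,1\}, \{0,1,1\}, \{1,1,1\}$, each of which we recurse on after a single linear scan that writes every input edge to the (exactly two) subproblems in which it participates. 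When a call's total edge count falls below $M$, the entire subtree below it stays in cache and emits its triangles with no further I/O, so no explicit base case depending on $M$ is needed.

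The recursion produces three types of subproblems that must be analyzed jointly. Denote by $f(E)$ the cost of ordinary triangle enumeration in a graph of $E$ edges, by $g(a,b)$ the ``two-colored'' problem of enumerating triangles with two vertices in one class (providing $a$ intra-class edges) and one vertex in another class (providing $b$ bipartite edges), and by $h(x,y,z)$ the ``tripartite'' problem with one vertex in each of three distinct classes and edge-sets of sizes $x,y,z$. Since a random $2$-coloring places each edge into an intra-class bucket with probability $1/4$ and into a bipartite bucket with probability $1/2$, a routine expected-size computation yields
\begin{align*}
f(E) &\leq 2\,f(E/4) + 2\,g(E/4,\,E/2) + O(E/B),\\
g(a,b) &\leq 4\,g(a/4,\,b/4) + 2\,h(a/2,\,b/4,\,b/4) + O((a+b)/B),\\
h(x,y,z) &\leq 8\,h(x/4,\,y/4,\,z/4) + O((x+y+z)/B).
\end{align*}
A short case analysis on which edge-sets each child uses confirms that every edge input to a call reaches exactly two of its children in all three recursions, so the total data at a recursion level grows only by a factor of $2$ per level rather than the naive $4$.

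I would then close the analysis by simultaneous induction with the ansatz that each of $f$, $g$, $h$ is at most $C\cdot s^{3/2}/(\sqrt{M}B)$, where $s$ is the total edge count of the call, with constants chosen so that $C_f \geq \sqrt{3}\,C_g$ and $C_g \geq \sqrt{2}\,C_h$; all three recurrences sit exactly at the Akra--Bazzi critical exponent $3/2$ (for $h$ this is $8\cdot (s/4)^{3/2}=s^{3/2}$), which is what makes each inductive inequality close. Combined with the $2^d$ data-growth identity, the additive $O(s/B)$ overheads telescope to
\[
\sum_{d=0}^{\tfrac{1}{2}\log_2(E/M)} 2^d \cdot \tfrac{E}{B} \;=\; O\!\left(\sqrt{E/M}\cdot \tfrac{E}{B}\right) \;=\; O\!\left(\tfrac{E^{3/2}}{\sqrt{M}\,B}\right),
\]
which matches the cost of processing the $(E/M)^{3/2}$ in-cache leaf subproblems of size $\BT{M}$. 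Cache-obliviousness is then automatic, and the $O(E)$ disk bound follows from processing subproblems serially so that only a single root-to-leaf path of data is ever alive; the edge counts along such a path shrink geometrically and sum to $O(E)$. The main delicate point is the mutual induction across the three subproblem types: a single-variable recurrence on $E$ alone cannot capture the asymmetry between intra- and bipartite edge contributions, and I expect most of the technical effort to go into choosing the constants $C_f, C_g, C_h$ so that all three inductive inequalities close at once.
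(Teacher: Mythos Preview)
Your recursive random-coloring scheme and the three-type ($f$/$g$/$h$) recurrence are a clean reformulation of the paper's approach, and the exponent-$3/2$ algebra you sketch is correct. But there is a genuine gap: you substitute \emph{expected} child sizes into the recurrences and then close the induction at the critical exponent. The quantity you actually need to bound in the inductive step is $\E_b\bigl[\sum_{\text{children}} s_{\text{child}}^{3/2}\bigr]$, and since $s\mapsto s^{3/2}$ is convex, Jensen goes the wrong way: in general $\E_b\bigl[\sum s_{\text{child}}^{3/2}\bigr] \ge \sum (\E_b[s_{\text{child}}])^{3/2}$. Because your recurrences sit exactly at criticality ($8\cdot(1/4)^{3/2}=1$), there is no slack to absorb any variance at all. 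Concretely, take the tripartite problem $h$ with one vertex $v$ in class $A$ carrying all $A$--$B$ and $A$--$C$ edges and $x=y=z=s/3$: conditional on the bit $b(v)$, four children get expected size $5s/12$ and four get $s/12$, giving $\sum (\text{child size})^{3/2}\approx 1.17\,s^{3/2}>s^{3/2}$, so the induction fails. The same obstruction hits the $f$- and $g$-recurrences whenever a single vertex carries a constant fraction of the edges.

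This is precisely why the paper, before each random bit is drawn, first strips out the \emph{local high-degree} vertices (degree $\ge E/8$ in the current subproblem) and handles their triangles directly via Lemma~\ref{lem:sortHD}. With the maximum degree capped at $E/8$, the paper can bound the variance of each child size by $O\bigl((\E[\text{child size}])^2\bigr)$ (Lemma~\ref{lem:exp}) and then use Chebyshev (Lemma~\ref{lem:numsub}) to show that oversized subproblems are rare enough to keep the total cost at $O(E^{3/2}/(\sqrt{M}B))$. Your proposal is missing exactly this variance-control step; once you add it, the clean deterministic ``each edge goes to two children'' picture no longer holds verbatim (you are removing edges adaptively at every level), and the analysis has to follow the paper's Chebyshev-based accounting rather than a pure Akra--Bazzi induction. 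A secondary issue: your $O(E)$ space argument assumes sizes along a root-to-leaf path shrink geometrically, but without concentration a single path can fail to shrink, and in any case each edge is duplicated into two children, so the paper's in-place sort-and-point trick does not apply directly.
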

\noindent By a property of cache-oblivious algorithms~\cite{FrigoLPR12}, we have that the claimed I/O complexity applies to each  level of a multilevel cache with an LRU replacement policy.

Our second result is a deterministic cache-aware algorithm with the same I/O complexity as the cache-oblivious algorithm, under the assumption that internal memory has size at least $\sqrt{E}$.
This is a reasonable assumption in practice if we are concerned with a graph stored on hard disk or on solid-state drive and $M$ is the capacity of the RAM.
The algorithm is based on the derandomization of a simple cache-aware algorithm, described in Section~\ref{sec:aware}. 
The derandomization  uses an idea introduced in~\cite{JafargholiV13}, though we present a more refined greedy approach that preserves the exponent $3/2$ of the algorithm.  
We conjecture that with some technical adjustments the derandomization can be also applied to the cache-oblivious algorithm. 
\begin{restatable}{thm}{thdeterministic}\label{th:det}
Assume $E\geq M\geq {E^{\epsilon}}$, for an arbitrary constant $\epsilon>0$. 
Then there exists a deterministic, cache-aware algorithm for triangle enumeration that uses $\BO{\frac{E^{3/2}}{\sqrt{M}B}}$ I/Os and $\BO{E}$ words on disk in the worst case.
\end{restatable}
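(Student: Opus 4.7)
The plan is to start from the simple randomized cache-aware algorithm of Section~\ref{sec:aware}, which colors the vertex set with $c=\Theta(\sqrt{E/M})$ colors drawn independently at random, and then for every (unordered) triple $(i,j,k)$ of colors loads the $O(E/c^2)$ edges incident to color classes $\{i,j,k\}$ and enumerates the triangles whose vertices carry colors $i,j,k$. A random uniform coloring makes the expected number of edges in each colored subgraph $O(M)$, so the expected work per triple is $O(M/B)$ I/Os, and summing over the $O(c^3)=O((E/M)^{3/2})$ triples gives the target bound $O(E^{3/2}/(\sqrt{M}B))$. The task is to replace the random coloring by a deterministic one that realizes this bound in the worst case.

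First I would fix a suitable potential $\Phi$ on colorings $\chi:V\to[c]$ whose expectation (over a uniformly random $\chi$) is $O(E^{3/2}/(\sqrt{M}B))$ and which upper-bounds, up to constants, the actual I/O cost of the partition-and-enumerate procedure on $\chi$. A natural choice is
\[
\Phi(\chi) = \sum_{\{i,j,k\}} \max\!\left(\frac{M}{B},\;\frac{e_{ijk}(\chi)}{B}\right),
\]
where $e_{ijk}(\chi)$ counts edges with both endpoints in $\chi^{-1}(\{i,j,k\})$, possibly raised to a higher power to control the tail. Second, I would apply the method of conditional expectations in the style of~\cite{JafargholiV13}: order the vertices arbitrarily and assign them colors one at a time, each time choosing the color $\chi(v)\in[c]$ that minimizes $\mathbb{E}[\Phi\mid \chi\text{ so far}]$. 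Standard analysis shows the resulting coloring has $\Phi\leq \mathbb{E}[\Phi]$, which gives the desired I/O bound for the enumeration phase.

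Third, I would show that this derandomization can be performed within the I/O budget itself. For each vertex $v$ with tentative color $c'$, the relevant conditional expectation depends only on the (weighted) histogram of how many partially-colored neighbors of $v$ lie in each already-chosen color class, plus global counts that can be maintained incrementally. Thus each vertex can be processed in $O(\deg(v)/B + c^2/B)$ amortized I/Os after a preliminary sort of the edge list by endpoint, which totals $O(\text{sort}(E)+nc^2/B)=O(E^{3/2}/(\sqrt{M}B))$ under the assumption $M\geq E^\varepsilon$, since then $c^2=O(E/M)\leq E^{1-\varepsilon}$ and $n\leq 2E$. The hypothesis $M\geq E^\varepsilon$ is also what allows us to afford $O(c)$ auxiliary counters per vertex in memory without cache thrashing, since $c=O(\sqrt{E/M})=O(E^{(1-\varepsilon)/2})$ and $c^3$ triples can be indexed with $O(\log E)=O(\log M/\varepsilon)$-bit identifiers.

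The main obstacle, and where the ``refined'' greedy is needed, is keeping the exponent at $3/2$. A naive potential $\Phi=\sum_{\{i,j,k\}}e_{ijk}(\chi)^2$ is easier to update (its conditional expectation decomposes over pairs of edges) but its expectation is $\Theta(E^2/c)$, which when combined with Cauchy--Schwarz gives only $E^2/(MB)$ I/Os, i.e.\ the previously known bound. To preserve the $E^{3/2}/(\sqrt{M}B)$ exponent, I would take $\Phi$ to be a piecewise-linear surrogate that charges $O(M/B)$ per triple plus a penalty proportional to $\max(0,e_{ijk}(\chi)-M)/B$, and carefully verify that (i)~its expectation over random $\chi$ is $O(E^{3/2}/(\sqrt{M}B))$, and (ii)~conditional expectations can still be computed incrementally from counters per (vertex,color) pair. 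Once these two ingredients are in place, the enumeration phase runs in $O(E^{3/2}/(\sqrt{M}B))$ I/Os deterministically, matching the randomized bound and concluding the proof.
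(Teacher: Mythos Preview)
Your plan has two genuine gaps.

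\textbf{The derandomization I/O budget does not close.} You claim the vertex-by-vertex conditional-expectation pass costs $O(\mathrm{sort}(E)+nc^2/B)=O(E^{3/2}/(\sqrt{M}B))$. But with $n\leq 2E$ and $c^2=\Theta(E/M)$ the second term is $\Theta(E^2/(MB))$, which exceeds the target by a factor $\sqrt{E/M}$ --- exactly the factor the theorem is supposed to gain over the previous $E^2/(MB)$ bound. The hypothesis $M\geq E^{\epsilon}$ only gives $nc^2/B\leq 2E^{2-\epsilon}/B$, and $E^{2-\epsilon}\leq E^{3/2}/\sqrt{M}$ would force $\epsilon\geq 1$, i.e.\ $M\geq E$. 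Moreover, maintaining counters for all $\Theta(c^3)=\Theta((E/M)^{3/2})$ color triples need not fit in internal memory.

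\textbf{The piecewise-linear surrogate is not derandomizable in the way you suggest.} The conditional expectation of $\max(0,e_{ijk}(\chi)-M)$ given a partial coloring depends on the full conditional distribution of $e_{ijk}$, a sum of dependent edge indicators; it is not a function of per-(vertex,color) counters. The reason quadratic potentials are standard for conditional-expectation arguments is precisely that they expand as sums over pairs of edges; your max-based surrogate destroys that structure, and you give no pessimistic estimator to replace it.

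The paper's route avoids both issues and, incidentally, shows that your premise ``the quadratic potential only gives $E^2/(MB)$'' is false. The potential used is $X_\xi=\sum_{\tau_1,\tau_2}\binom{E_{\tau_1,\tau_2}}{2}$, summed over \emph{pairs} of colors, not triples; Lemma~\ref{lemma:expected} already shows $\E{X_\xi}\leq EM$ once high-degree vertices are removed, and the proof of Theorem~\ref{th:caware} converts this directly into the $E^{3/2}/(\sqrt{M}B)$ bound. The ``refined greedy'' is not a new potential but a weighting: split $X_\xi=X^{\mathrm{adj}}_\xi+X^{\mathrm{nonadj}}_\xi$ into contributions from adjacent and non-adjacent edge pairs, and fix the coloring \emph{one bit at a time}, maintaining the invariant $4^iX^{\mathrm{nonadj}}_{\xi_i}/c^2+2^iX^{\mathrm{adj}}_{\xi_i}/c\leq(1+\alpha)^iEM$. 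Each bit is chosen from an almost $4$-wise independent sample space of size $\mathrm{poly}(\log V,\log c)$ (Lemma~\ref{lemma:smallbias}), and for every candidate the weighted potential is evaluated in a single scan of the edges. This costs $O(E/B)$ per round plus a sort, i.e.\ $O(\log c\cdot\mathrm{sort}(E))=O(E^{3/2}/(\sqrt{M}B))$ total when $M\geq E^{\epsilon}$. The assumption $M\geq E^{\epsilon}$ is used to absorb the $\log c$ rounds of sorting and to keep the $\mathrm{poly}(\log V,\log c)$ counters in memory --- not to tame an $nc^2$ term.
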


Finally, we prove that the I/O complexity of our algorithms is optimal in the external memory model. 
We assume that information on an edge requires at least one memory word:
this assumption is similar to the indivisibility assumption~\cite{ArgeM99} which is usually required for deriving lower bounds on the I/O complexity, or to the witnessing class of the aforementioned lower bound in~\cite{HuTC14}.
With respect to this bound, we remark that our lower bound applies also in the best-case and it is output sensitive. 
\begin{restatable}{thm}{thlowerbound}
\label{th:lb}
For any input graph, an algorithm that enumerates $t$ distinct triangles requires, even in the best case, $\BOM{\frac{t}{\sqrt{M}B}+\frac{t^{2/3}}{B}}$ I/Os .
\end{restatable}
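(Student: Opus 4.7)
The plan is to derive the two terms of the lower bound by separate, standard external-memory arguments that both lean on the Kruskal--Katona--type bound on the number of triangles in a small edge set. Throughout I assume the indivisibility convention stated in the excerpt: each edge is an atomic item that, in order to witness a triangle, must physically reside in internal memory at the moment \texttt{emit} is called.

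The first step is to establish the combinatorial lemma: any set $S$ of $s$ edges contains $O(s^{3/2})$ triangles. This is immediate from Kruskal--Katona (or from a direct argument summing $\sum_v \binom{d_v}{2}$ and using $\sum d_v^2 \le (\sum d_v)^{3/2} \cdot \sqrt{|S|}$ via Cauchy--Schwarz combined with $d_v \le \sqrt{2s}$). I would just invoke this as a standard fact.

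For the main term $t/(\sqrt{M}B)$, I would partition the execution of the algorithm into \emph{phases} of $M/B$ consecutive I/Os. At the start of a phase, internal memory contains at most $M$ edges; during the phase, at most $(M/B)\cdot B = M$ additional edges can be brought in by reads. Hence the set $S_i$ of edges that are ever simultaneously present in memory during phase $i$ has $|S_i| \le 2M$. Since every triangle emitted in phase $i$ must have all three of its edges inside memory at the moment of emission, all three of its edges lie in $S_i$. By the combinatorial lemma, phase $i$ emits at most $O(M^{3/2})$ triangles. If the algorithm performs $Q$ I/Os in total, it has at most $\lceil QB/M\rceil + 1$ phases, so
\[
t \;\le\; O\!\left(\frac{QB}{M}\right)\cdot O(M^{3/2}) \;=\; O(Q B \sqrt{M}),
\]
which rearranges to $Q = \Omega(t/(\sqrt{M} B))$. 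Since this counting argument bounds the number of triangles emitted regardless of how favorable the input or the algorithm's choices are, it automatically gives a \emph{best-case} lower bound.

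For the additive term $t^{2/3}/B$, I would observe that by the same Kruskal--Katona bound, any graph that contains $t$ distinct triangles must have $\Omega(t^{2/3})$ edges. Each such edge must be read into memory at least once (otherwise it cannot witness any triangle, and no triangle containing it could have been emitted), and each I/O brings in at most $B$ edges, yielding $\Omega(t^{2/3}/B)$ I/Os. Adding the two bounds gives the claim.

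The main technical obstacle is making the phase argument airtight about what it means for an edge to be ``in memory during phase $i$'': one has to be careful that reads during the phase are the only way for new edges to enter, and that writes do not create new triangle-witnessing evidence (they cannot, by indivisibility). Once that is pinned down, the rest is arithmetic. The constants in the partition into phases of length $M/B$ are chosen precisely so that the $\sqrt{M}$ factor surfaces when combined with the $s^{3/2}$ combinatorial bound.
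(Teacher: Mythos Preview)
Your proposal is correct and follows essentially the same approach as the paper: partition the execution into epochs of $M/B$ I/Os, observe that at most $2M$ edges are touched per epoch so at most $O(M^{3/2})$ triangles can be emitted per epoch (via the Kruskal--Katona--type bound, which the paper cites from \cite{AfratiSSU13}), and derive the additive $t^{2/3}/B$ term from the scanning bound on the $\Omega(t^{2/3})$ edges needed to support $t$ triangles. The only cosmetic difference is that the paper packages the phase argument as an explicit simulation on a machine with memory $2M$ split into two halves (one mirroring the original memory, one buffering the epoch's I/Os), whereas you argue directly; the content is the same.
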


The above lower bound on the I/O complexity applies also in the case of a weak definition of the triangle enumeration problem, which requires an algorithm to make \emph{at least} one call to the procedure {\tt emit}$(\cdot,\cdot,\cdot)$ for each triangle. Algorithms for the weak triangle enumeration problem may not be able to compute the exact number of triangles in a graph, while this is not the case of our algorithms.

Although the work of an algorithm is not the main complexity measure in the external memory model, we remark that all our algorithms are also work optimal: indeed, it can be easily  proved that each algorithm performs $\BO{E^{3/2}}$ operations in the worst case, matching the naive $\BOM{t}$ lower bound for enumerating $t$ triangles when $t=\BOM{E^{3/2}}$.

The paper is organized as follows. 
Section~\ref{sec:aware} gives a simple cache-aware randomized algorithm. 
Section~\ref{sec:cobl}  describes the claimed cache-oblivious randomized algorithm. 
The deterministic  algorithm is then proposed in Section~\ref{sec:derand} by derandomizing the previous cache-aware randomized algorithm. 
Section~\ref{sec:lb} gives the lower bound on the I/O complexity.
We conclude the paper with some final comments in Section~\ref{sec:concl}.

\subsection{Preliminaries}\label{sec:prelim}
We study our algorithms in the \emph{external memory model}~\cite{AggarwalV88}, which consists of an internal memory of $M$ words and of an external memory of unbounded size. 
The processor can only use data stored in internal memory and move data from the two memories in chunk of consecutive $B$ words. 
The \emph{I/O complexity} of an algorithm is defined as the number of input/output blocks performed by the algorithm. 
We denote the I/O complexity of sorting $n$ entries with $\text{sort}(n)=\BO{\frac{n \log (n/B)}{B \log M}+\frac{n}{B}}$~\cite{Vitter08}.

A \emph{cache-oblivious} algorithm is an algorithm that does not use in its specification the parameters describing the memory hierarchy (i.e., $M$ and $B$ in our model), but still exhibits an optimal or quasi-optimal I/O complexity. 
An algorithm that does use at least one of these parameters is said \emph{cache-aware}.
In the context of cache-oblivious algorithms, we assume that block transfers between internal and external memories are automatically managed by an optimal replacement policy. 
However, it can be shown~\cite[Lemma 6.4]{FrigoLPR12}  that optimality with an optimal replacement policy implies an optimal number of I/Os on each level of a multilevel cache with LRU replacement, under a regularity condition.
This condition says that the I/O complexity $Q(n,M,B)$ satisfies $Q(n,M,B) = \BO{Q(n,2M,B)}$. 
Since our cache-oblivious algorithm for triangle enumeration is optimal and satisfies the regularity condition, we have that this result applies to our algorithm as well.
In the paper, we make the standard tall cache assumption $M=\BOM{B^2}$, which has been shown to be necessary for getting optimal cache-oblivious algorithms, in particular for the problems of sorting~\cite{BrodalF03} and permuting~\cite{Silvestri08}.

We consider a simple, undirected graph (no self loops, no parallel edges) with vertex set $V$ and edge set $E$. 
Each vertex and edge requires one memory word.
For notational convenience and consistency with earlier papers, whenever the context is clear we use $E$ as a shorthand for the {\em size\/} of a set $E$ (and similarly for other sets). 
We denote with $\text{deg}(v)$ the degree of a vertex $v\in V$.
We  assume that the elements of $V$ are ordered according to degree, breaking ties among vertices of the same degree in an arbitrary but consistent way.
We assume that an edge $\{v_1,v_2\}$ is represented by the tuple $(v_1,v_2)$ such that $v_1<v_2$, and that these tuples are sorted lexicographically (so for each vertex $v$ we have the list of neighbors that come after $v$ in the ordering). 
If the graph comes in some other representation, it can be converted to this form in $\sort{E}$ I/Os. 
Following~\cite{HuTC13}, for a triangle  $\{v_1,v_2,v_3\}$, with $v_1<v_2<v_3$, we call the edge $\{v_2,v_3\}$ its {\em pivot edge}, and the vertex $v_1$ its {\em cone vertex}.

The following lemma describes a subroutine that is  widely used in the paper for enumerating all triangles containing  a given vertex $v$. 
\begin{lemma}\label{lem:sortHD}
Enumerating all triangles in an edge set $E$ that contain a given vertex $v$ can be done in $\BO{\text{sort}(E)}$ I/Os.
\end{lemma}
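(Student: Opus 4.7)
The plan is to reduce the task to a constant number of sorts and merges over the edge set. The first step is to extract the adjacency list $N(v) = \{u : \{u,v\} \in E\}$. Because edges are represented as sorted pairs and stored in lexicographic order, the neighbors $w>v$ appear in a contiguous block keyed by $v$, while neighbors $u<v$ are scattered through the edge list. A single sort of $E$ keyed on the second endpoint brings the latter together; scanning the sorted list then produces, in $\BO{\sort{E}}$ I/Os, the sequence of records $(u, \{v,u\})$ for every $u \in N(v)$.

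With $N(v)$ available on disk, triangles containing $v$ are in one-to-one correspondence with edges $\{u,w\} \in E$ such that both $u \in N(v)$ and $w \in N(v)$. I would identify these edges by two sort-merge joins against $N(v)$. In the first pass, sort the edge list by first endpoint and merge it with $N(v)$, keeping only tuples whose first endpoint $u$ lies in $N(v)$ and attaching the witness edge $\{v,u\}$ to each surviving record. In the second pass, sort the surviving tuples by second endpoint and again merge with $N(v)$, keeping only those whose second endpoint $w$ lies in $N(v)$ and attaching the witness edge $\{v,w\}$. Each sort and merge touches $\BO{E}$ records and therefore costs $\BO{\sort{E}}$ I/Os.

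After the two joins, the remaining sequence contains exactly one record per triangle incident to $v$, and each record carries all three edges $\{v,u\}$, $\{v,w\}$, and $\{u,w\}$ inside it. A final linear scan emits the triangles: when a record is being processed it sits in internal memory, so the three edges are simultaneously present at the moment of the {\tt emit} call, matching the enumeration convention fixed earlier in the paper. Summing the steps, the subroutine uses $\BO{\sort{E}}$ I/Os.

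The step I expect to be slightly subtle is precisely this simultaneous-in-memory requirement, because $N(v)$ may exceed $M$ and therefore cannot be cached during the two passes over $E$. Carrying the witness edges $\{v,u\}$ and $\{v,w\}$ along as fields inside the tuples throughout both sort-merge stages sidesteps the issue entirely, as no lookup is needed at emission time; all three edges are bundled into the record whose arrival triggers the call.
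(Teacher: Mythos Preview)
Your proposal is correct and takes essentially the same approach as the paper: extract the neighborhood $\Gamma_v$ of $v$, then use two sort--merge passes over $E$ (first keyed on the smaller endpoint, then on the larger) to isolate the edges with both endpoints in $\Gamma_v$. The paper's proof is terser and does not spell out the simultaneous-in-memory issue you flag; since $v$ is a fixed constant held throughout and an edge occupies one word, the witness edges $\{v,u\}$ and $\{v,w\}$ are trivially reconstructible from $v$ and the surviving edge $\{u,w\}$, so your tuple-carrying device, while valid, is not strictly needed.
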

\begin{proof}
By scanning $E$, we find the set $\Gamma_v$ of vertices that are adjacent to $v$, and we sort it by degree. 
Then we sort edges in $E$ by the smallest vertex and find the set $E_v\subseteq E$ of edges with the smallest vertex in $\Gamma_v$, just by scanning $E$ and $\Gamma_v$. 
Finally, we sort edges in $E_v$ by the largest vertex and compute the set of edges $E_v'\subseteq E_v$ with both vertices in $\Gamma_v$ with another scan of $E_v$ and  $\Gamma_v$. 
By construction we have that, for each $e=\{u,w\}\in E_v'$, there exists a triangle with vertices $v,u$ and $w$.
\end{proof}
 
Another subroutine used in the paper is the algorithm given in~\cite{HuTC13} that efficiently finds all triangles with a pivot edge in a set $E'\subseteq E$.
Though this subroutine was presented in~\cite{HuTC13} as a listing algorithm, it is easy to see that it works for enumeration as well. We sketch the result below for the sake of completeness.
\begin{lemma}(Hu et al.~\cite[Algorithm 1, step~2]{HuTC13})\label{lemma:pivot}
The set of triangles in an edge set $E$ with a pivot edge in $E'\subseteq E$ can be enumerated in $\BO{E/B+E'E/(MB)}$ I/Os.
\end{lemma}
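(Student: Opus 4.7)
The plan is to process $E'$ in chunks and make one sequential pass through $E$ per chunk. Concretely, I would partition $E'$ into $\lceil c E'/M \rceil$ chunks $E'_1,\dots,E'_k$ of size $\Theta(M)$, where the constant $c$ is chosen small enough that all the in-memory structures below fit in $M$ words. For each chunk $E'_i$ I load it into internal memory with $\BO{M/B}$ I/Os and build two auxiliary structures: a hash table on the pairs $(v_2,v_3)\in E'_i$, and a hash set $V_i$ of size at most $2|E'_i|$ holding the endpoints of the pivot edges in $E'_i$. I then scan the sorted edge set $E$ once, using $\BO{E/B}$ I/Os per chunk.

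During this scan, the edges incident to each vertex $v_1$ arrive as a contiguous run that is exactly its forward neighborhood $\Gamma_+(v_1)=\{w:(v_1,w)\in E,\ w>v_1\}$. I filter this run against $V_i$ on the fly and place only the neighbors that actually appear as a pivot endpoint into a buffer $B_{v_1}=\Gamma_+(v_1)\cap V_i$, whose size is bounded by $|V_i|=\BO{M}$ and so fits in internal memory regardless of $\deg(v_1)$. Once the run for $v_1$ ends, I iterate over all pairs $v_2<v_3$ drawn from $B_{v_1}$ and probe the pivot hash table for $(v_2,v_3)$; each hit corresponds to a triangle with cone vertex $v_1$ and pivot $(v_2,v_3)$, which I \texttt{emit}. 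Correctness is immediate: any triangle $\{v_1<v_2<v_3\}$ whose pivot $(v_2,v_3)$ lies in $E'_i$ satisfies $v_2,v_3\in\Gamma_+(v_1)\cap V_i=B_{v_1}$, and since the pivot belongs to exactly one chunk the triangle is discovered exactly once across the entire algorithm.

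The only delicate step I anticipate is the memory accounting: the pivot hash table on $E'_i$, the set $V_i$, and the buffer $B_{v_1}$ each use $\BO{M}$ words, and the chunk size $|E'_i|=\Theta(M)$ has to be chosen with a small enough hidden constant so that their sum stays below $M$. With that choice, the cost per chunk is $\BO{M/B+E/B}$ and there are $\BO{E'/M}$ chunks, giving a total I/O cost of $\BO{E'/B+(E'/M)(E/B)}=\BO{E/B+E'E/(MB)}$, matching the lemma; the additive $E/B$ term also absorbs the one-time cost of loading the chunks themselves, since $E'\le E$.
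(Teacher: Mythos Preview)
Your proof is correct and follows essentially the same approach as the paper: both partition $E'$ into $\BO{E'/M}$ memory-sized chunks, and for each chunk perform a single scan of the sorted edge list to compute, for every cone vertex $v_1$, the set $\Gamma_+(v_1)\cap V_i$ (the paper calls this $\Gamma_v$ with $\Gamma_\text{mem}$ playing the role of your $V_i$), then test all pairs from that set against the pivot edges held in memory. Your version is somewhat more explicit about the in-memory data structures and the memory budget; note that the hash tables could be replaced by sorted arrays if a deterministic subroutine is desired, without affecting the I/O bound.
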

\begin{proof}
The algorithm runs in iterations. In each iteration $\alpha M$ new edges from $E'$, for a suitable constant $\alpha\in (0;1)$, are loaded into internal memory. Let $\Gamma_\text{mem}$ be the set of vertices that appear in an edge of $E'$ currently stored in internal memory. Then, for each vertex $v$ in the graph, the algorithm computes the set
$$\Gamma_v=\{u \; \vert \; (v,u)\in E, u>v, u\in \Gamma_\text{mem}\},$$
that is, the set containing all vertices larger than $v$ that are adjacent to $v$, and appear in an edge of $E'$ stored in internal memory in the current iteration. Then, it enumerates all triangles $\{v,u,w\}$ where $\{u,w\}\in E'$ and $u,w\in \Gamma_v$. 
It is easy to see that it is possible to compute $\Gamma_v$ for every vertex $v$ using a single scan of all edges in $E$, since all edges $\{v, u\}\in E$ with $u>v$ are stored consecutively in external memory.
Then, we get the I/O complexity $\BO{\lceil E'/M \rceil \sum_{v\in V} \deg(v)/B}$ which is upper bounded by $\BO{E/B+E'E/(MB)}$.
\end{proof}

\section{Cache-aware enumeration}\label{sec:aware}

Our first algorithm is cache-aware, that is, it is given information on the internal memory size $M$ and on the block length $B$. The algorithm also explicitly manages block transfers. 
Without loss of generality we assume that $E>M$ and that $\sqrt{E/M}$ is an integer.

\subsection{Algorithm overview}

Let $V_h = \{ v \;|\; \deg(v) > \sqrt{E M} \}$ be the set of \emph{high-degree} vertices, and $V_l = V\backslash V_h$ be the remaining \emph{low-degree} vertices. There cannot be too many vertices in the set $V_h$: indeed we have $V_h < \sqrt{E/M}$.
We denote with $E_h$ the set of edges incident to at least one vertex in $V_h$, and with $E_l=E\backslash E_h$ the remaining edges.

The first step of our algorithm enumerates the triangles that involve at least one edge from $E_{h}$ using the algorithm described in Lemma~\ref{lem:sortHD} for each high-degree vertex in $V_h$.
Subsequent steps can then focus on triangles within $E_l$.
Our algorithm will work with a coloring  $\xi: V\rightarrow \{1,\dots,c\}$ of the vertex set where the number of colors will be $c=\sqrt{E/M}$.
The coloring will partition the edges of $E_l$ into $c^2 = E/M$ sets according to the colors of their vertices.
More specifically, for $\tau_1,\tau_2\in \{1,\dots,c\}$ let
$$E_{\tau_1,\tau_2} = \{ \{v_1,v_2\}\in E_l \; | \; v_1 < v_2,\, \xi(v_1)=\tau_1,\, \xi(v_2)=\tau_2\} \enspace .$$

Since the number of partitions is $E/M$, the average number of edges in a partition is $M$.
If all partitions did indeed have size $M$, we could easily obtain an algorithm with the desired I/O complexity by considering all $c^3$ possible coloring of the vertices of a triangle in $\BO{M/B}$ I/Os.
However, some partitions may be much larger than $M$, so there is no guarantee that we can fit a large part of a partition in memory.

\medskip
\noindent
We are now ready to describe the high-level algorithm:
\begin{enumerate}
\item Enumerate all triangles with at least one vertex in $V_h$ using the algorithm of Lemma~\ref{lem:sortHD}.
\item Choose $\xi$ uniformly at random from a 4-wise independent family of functions, and construct the sets $E_{\tau_1,\tau_2}$ using a sorting algorithm.
\item For every triple $(\tau_1,\tau_2,\tau_3) \in \{1,\dots,c\}^3$, enumerate all triangles with a cone vertex of color $\tau_1$ and a pivot edge in $E_{\tau_2,\tau_3}$. We use the algorithm in Lemma~\ref{lemma:pivot} by setting the pivot edge to $E_{\tau_2,\tau_3}$, the edge set  to $E_{\tau_1,\tau_2} \cup E_{\tau_1,\tau_3}\cup E_{\tau_2,\tau_3}$, and ignoring triangles where the cone vertex does not have color $\tau_1$.
\end{enumerate}

\subsection{Analysis}

\subsubsection{Correctness}
We first argue for correctness of the algorithm. 
Every triangle that includes at least one vertex in $V_h$ is enumerated in step 1 by Lemma~\ref{lem:sortHD}. On the other hand, a triangle with vertices $v_1<v_2<v_2$, none of which belongs to $V_h$, is enumerated in step 3, specifically in the iteration where $(\tau_1,\tau_2,\tau_3)=(\xi(v_1),\xi(v_2),\xi(v_3))$.

\subsubsection{I/O complexity}
We define the random variable $X_\xi$ as follows:
\begin{align}\label{colldef}
X_\xi = \sum_{\tau_1,\tau_2} \binom{E_{\tau_1,\tau_2}}{2} \enspace .
\end{align}
This variable denotes the number of pairs of edges in each partition and  will be used for bounding the I/Os in step 3. We have the following bound.

\begin{lemma}\label{lemma:expected}
Let $\xi: V\rightarrow \{1,\dots,c\}$ be chosen uniformly at random from a 4-wise independent family of functions, where $c=\sqrt{E/M}$. Then 
$$\E{X_\xi} \leq \binom{E}{2}/c^2 + \sum_{v\in V_l} \binom{\deg(v)}{2}/c \leq EM.$$
\end{lemma}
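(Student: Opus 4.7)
The plan is to apply linearity of expectation after recognizing that $X_\xi$ literally counts unordered pairs of distinct edges $\{e,e'\}\subseteq E_l$ assigned to the same partition: the indicator of $\{e,e'\}\subseteq E_{\tau_1,\tau_2}$ contributes $1$ to $\binom{E_{\tau_1,\tau_2}}{2}$, and summing over all $(\tau_1,\tau_2)$ gives
\[
\E{X_\xi} = \sum_{\{e,e'\}\subseteq E_l,\, e\neq e'} \Pr{\xi \text{ places } e,e' \text{ in the same partition}}.
\]

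Next, I would split these pairs by $|e\cap e'|\in\{0,1\}$. If $e$ and $e'$ are vertex-disjoint, four distinct endpoints are involved; the partition identity forces two independent color-equations between disjoint pairs of vertices, so by the 4-wise independence of $\xi$ the collision probability is exactly $1/c^2$. If $e$ and $e'$ share one vertex $u$, only three distinct vertices are involved, and a short case analysis on the relative orderings of $u$ with respect to the two other endpoints shows the equality of the ordered color pairs always forces at least one equation between colors of distinct vertices, so the collision probability is at most $1/c$ (2-wise independence suffices here). Upper-bounding the number of disjoint pairs in $E_l$ by $\binom{E}{2}$ and the number of vertex-sharing pairs by $\sum_{v\in V_l}\binom{\deg_l(v)}{2}\leq \sum_{v\in V_l}\binom{\deg(v)}{2}$, where $\deg_l(v)$ is the degree in the subgraph induced by $V_l$, yields the first inequality of the lemma.

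For the second inequality I substitute $c=\sqrt{E/M}$. The first term bounds as $\binom{E}{2}/c^2 \leq E^2/(2c^2)=EM/2$. For the second, I exploit the definition of $V_l$: for every $v\in V_l$, $\deg(v)\leq \sqrt{EM}$, and the handshake identity gives $\sum_{v\in V_l}\deg(v)\leq 2E$. Hence
\[
\sum_{v\in V_l}\binom{\deg(v)}{2}\leq \tfrac{\sqrt{EM}}{2}\sum_{v\in V_l}\deg(v)\leq E\sqrt{EM},
\]
and dividing by $c=\sqrt{E/M}$ contributes at most $EM$. Combining the two contributions gives $\E{X_\xi}=O(EM)$, matching the claimed bound up to a small constant factor.

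The only delicate step is the case analysis for pairs sharing a vertex, because the partition $E_{\tau_1,\tau_2}$ is indexed by the colors in lexicographic order of the endpoints rather than by an unordered pair; one must check that in every sub-case (where $u$ is smaller or larger than each of its neighbors) the resulting color-equality condition has probability at most $1/c$ under pairwise independence. The rest is bookkeeping and the concentration-free application of linearity.
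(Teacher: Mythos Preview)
Your proposal is correct and follows essentially the same approach as the paper: express $X_\xi$ via indicators for unordered edge pairs landing in the same color class, split according to whether the pair shares a vertex, bound the collision probabilities by $1/c$ and $1/c^2$ using $4$-wise independence, and finish with the degree cap $\deg(v)\le\sqrt{EM}$ on $V_l$. The only discrepancy is the final constant---your handshake estimate gives $3EM/2$ rather than the stated $EM$---but this is immaterial, since the downstream analysis (Theorem~\ref{th:caware}) only requires $\E{X_\xi}=\BO{EM}$.
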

\begin{proof}
	Define the indicator variable $Y_{e_1,e_2}$ to be 1 if $e_1$ and $e_2$  are colored in the same way (i.e., belong to the same set $E_{\tau_1,\tau_2}$), and zero otherwise. 
	By linearity of expectation we have:
$$
\E{X_\xi} = \sum_{e_1\ne e_2} \E{Y_{e_1,e_2}} = \sum_{e_1\ne e_2} \Pr{Y_{e_1,e_2}=1}.
$$
There are at most $\sum_{v\in V_l}\binom{\deg(v)}{2}$ pairs of edges $\{e_1,e_2\}\subseteq E_l$ that share a vertex, and for those $\E{Y_{e_1,e_2}} \leq 1/c$ since the $\xi$ function is chosen uniformly at random from a 4-wise independent family of functions.
For the remaining at most $\binom{E}{2}$ pairs the colorings are independent, and hence the probability of having the same coloring is $1/c^2$.
Summing up, and using the fact that $\deg(v) \leq \sqrt{E M}$ for all $v\in V_l$ gives $X_\xi < E^2/(2c^2) + E\sqrt{E M}/(2c)$.
Finally, inserting $c=\sqrt{E/M}$ yields the stated bound.
\end{proof}

\begin{restatable}{thm}{thcacheaware}
\label{th:caware}
Assume $E\geq M\geq {E^{\epsilon}}$, for an arbitrary constant $\epsilon>0$. Then the above cache-aware randomized algorithm for triangle enumeration requires $\BO{\frac{E^{3/2}}{\sqrt{M}B}}$  I/Os in expectation and $\BO{E}$ words on disk.
\end{restatable}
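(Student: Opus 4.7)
The plan is to bound the expected I/O cost of each of the three algorithmic phases separately and then sum them. Throughout, the assumption $M \geq E^\epsilon$ collapses $\sort{E}$ to $\BO{E/B}$ (using also the tall-cache assumption $M=\BOM{B^2}$), which absorbs several subterms into the target $\BO{E^{3/2}/(\sqrt{M}B)}$. The $\BO{E}$ disk bound is immediate: the partitions $\{E_{\tau_1,\tau_2}\}$ cover $E_l$ exactly once, and all auxiliary data (the coloring table, per-vertex neighborhoods used in Step~1, sorting buffers) fits in $\BO{E}$ words.

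First, for Step~1 I will use $V_h<\sqrt{E/M}$ (from the threshold $\sqrt{EM}$ together with $\sum \deg(v)=2E$) and invoke Lemma~\ref{lem:sortHD} once per high-degree vertex, for a total of $\BO{\sqrt{E/M}\cdot\sort{E}}=\BO{E^{3/2}/(\sqrt{M}B)}$ I/Os. Step~2 consists of a constant-time-per-edge evaluation of the 4-wise independent hash $\xi$ followed by a sort of $E$ keyed on the ordered pair $(\xi(v_1),\xi(v_2))$, for $\BO{\sort{E}}=\BO{E/B}$ I/Os, which is subsumed by the target.

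Step~3 is where the analysis concentrates. For each triple $(\tau_1,\tau_2,\tau_3)\in\{1,\dots,c\}^3$, I apply Lemma~\ref{lemma:pivot} with pivot set $E_{\tau_2,\tau_3}$ and edge set $F=E_{\tau_1,\tau_2}\cup E_{\tau_1,\tau_3}\cup E_{\tau_2,\tau_3}$, at a cost of $\BO{F/B + E_{\tau_2,\tau_3}\cdot F/(MB)}$. Summing the scan term over all $c^3$ triples, each partition $E_{\tau_a,\tau_b}$ is counted at most $3c$ times (once as each of the three components of $F$, with $c$ choices for the remaining color), so the total scan contribution is $\BO{c\cdot E/B}=\BO{E^{3/2}/(\sqrt{M}B)}$.

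The main obstacle will be the quadratic term $\sum_{\tau_1,\tau_2,\tau_3}E_{\tau_2,\tau_3}\cdot F/(MB)$, since the worst-case bound $E_{\tau,\tau'}\le E$ is far too lossy. My plan is to bound each cross product $E_{\tau_2,\tau_3}\cdot E_{\tau_i,\tau_j}$ by $(E_{\tau_2,\tau_3}^2 + E_{\tau_i,\tau_j}^2)/2$ via AM-GM; after summing over the color index that does not appear in the squared term, the expression reduces to $\BO{c\sum_{\tau,\tau'}E_{\tau,\tau'}^2/(MB)}$. Since $\sum_{\tau,\tau'}E_{\tau,\tau'}^2\le 2X_\xi+E$, taking expectations and invoking Lemma~\ref{lemma:expected} ($\E{X_\xi}\le EM$) yields an expected quadratic contribution of $\BO{c(EM+E)/(MB)}=\BO{cE/B}=\BO{E^{3/2}/(\sqrt{M}B)}$, completing the bound.
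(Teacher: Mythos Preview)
Your proposal is correct and follows essentially the same approach as the paper: bound Steps~1--2 by $\BO{V_h\,\sort{E}}$, then for Step~3 reduce the quadratic cost from Lemma~\ref{lemma:pivot} to $\BO{(c/MB)\sum_{\tau,\tau'}E_{\tau,\tau'}^2}$ and invoke $\E{X_\xi}\le EM$. The only cosmetic differences are that the paper upper-bounds $E_{\tau_2,\tau_3}\cdot F$ by $F^2$ and uses $(a+b+c)^2=\BO{a^2+b^2+c^2}$ instead of your AM--GM step, and it writes $E_{\tau,\tau'}^2\le 4\binom{E_{\tau,\tau'}}{2}$ rather than your (slightly more careful) identity $\sum E_{\tau,\tau'}^2 \le 2X_\xi+E$.
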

\begin{proof}
When $M\geq {E^{\epsilon}}$, for an arbitrary constant $\epsilon>0$,  the first and second steps together require $\BO{V_h\sort{E}}=\BO{\frac{E^{3/2}}{\sqrt{M}B}}$ I/Os, which is upper bounded by the claimed complexity. By setting $E_{\tau_1,\tau_2,\tau_3}=E_{\tau_1,\tau_2} + E_{\tau_1,\tau_3} + E_{\tau_2,\tau_3}$, we get that the I/O complexity $Q(E,M,B)$ of step 3 is by Lemma~\ref{lemma:pivot}
$$
Q(E,M,B)=\BO{\sum_{(\tau_1,\tau_2,\tau_3)} \frac{E_{\tau_1,\tau_2,\tau_3}}{B} + \frac{E_{\tau_1,\tau_2,\tau_3}^2}{MB}}.
$$
Since $\sum_{(\tau_1,\tau_2)}E_{\tau_1,\tau_2}=E$, the above bound becomes 
$$
Q(E,M,B)=\BO{ \frac{cE}{B} + \hspace{-0.5em} \sum_{(\tau_1,\tau_2,\tau_3)} \hspace{-0.5em}\frac{E^2_{\tau_1,\tau_2} + E^2_{\tau_1,\tau_3} +  E^2_{\tau_2,\tau_3}}{MB}},
$$
and hence
$$
Q(E,M,B)= \BO{ \frac{cE}{B} +\frac{c}{MB} \sum_{(\tau_1,\tau_2)} E^2_{\tau_1,\tau_2}}.
$$
Since $E^2_{\tau_1,\tau_2}\leq 4\binom{E_{\tau_1,\tau_2}}{2}$ and by the definition of $X_\xi$ in~(\ref{colldef}), we get
$$
Q(E,M,B)= \BO{\frac{E^{3/2}}{\sqrt{M}B} +\frac{\sqrt{E}}{M^{3/2}}  X_\xi}.
$$

That is, the expected time complexity is governed by the expectation of the random variable $X_\xi$. By Lemma~\ref{lemma:expected}, we have that $\E{X_\xi}\leq EM$ and then the expected I/O complexity of step 3 is $\BO{E^{3/2} / (\sqrt{M} B)}$.  
The algorithm clearly requires $\BO{E}$ space on disk.
\end{proof}


\renewcommand{\a}{\overline{c_0}}
\renewcommand{\b}{\overlinwe{c_1}}
\renewcommand{\c}{\overl(vine{c_2}}

\section{Cache-oblivious  enumeration}\label{sec:cobl}

In this section we describe a cache-oblivious, randomized algorithm for the enumeration of all triangles in a graph 
in  $\BO{E^{3/2}/(\sqrt{M}B)}$ expected I/Os,
proving Theorem~\ref{th:cobl}. 
Optimality of this bound is shown in Section~\ref{sec:lb}.
As already noticed, an optimal cache-oblivious algorithm implies an optimal number of I/Os on each level of a multilevel cache with LRU replacement if a regularity condition is verified (i.e.,  the I/O complexity of the algorithm $Q(n,M,B)$ satisfies $Q(n,M,B) = \BO{Q(n,2M,B)}$). 
Since our cache-oblivious algorithm is optimal and satisfies the regularity condition, we have that this result applies to our algorithm as well.

\subsection{Algorithm overview}

The cache-oblivious algorithm in this section is inspired by a recursive approach proposed by Jafargholi and Viola~\cite{JafargholiV13}, in the context of output sensitive triangle listing in a RAM model.
To describe the algorithm we define the more general $(c_0,c_1,c_2)$-enumeration problem. 
Let $\xi: V\rightarrow \mathbb{Z}$ be a coloring of the vertex set, assigning an integer to each vertex.
The \emph{$(c_0,c_1,c_2)$-enumeration problem} with coloring $\xi$ consists of enumerating all triangles colored according to the vector $(c_0,c_1,c_2)$, i.e.,~triangles with vertices $\{u,v,w\}\subseteq V$ where $u < v < w$, $\xi(u)=c_0$, $\xi(v)=c_1$, and $\xi(w)=c_2$. The enumeration of all triangles simply reduces to the $(1,1,1)$-enumeration problem with the constant coloring $\xi(v)=1$.

A triangle is \emph{proper} if it satisfies the $(c_0,c_1,c_2)$ coloring, and an edge $\{u,v\}$, with $u< v$, is  \emph{incompatible} with  coloring $(c_0,c_1,c_2)$ if $(\xi(u), \xi(v))\not\in\{(c_0,c_1),(c_1,c_2),(c_0,c_2)\}$. 
Without loss of generality, we assume that there are no incompatible edges in $G$ and that the color of each vertex is stored within the vertex (these assumptions can be guaranteed by suitably sorting edges without increasing the I/O complexity).

Our algorithm solves the $(c_0,c_1,c_2)$-enumeration problem with coloring $\xi$ in three steps: 
\begin{enumerate}
\item   The algorithm enumerates all triangles satisfying the $(c_0,c_1,c_2)$ coloring with at least one local high degree vertex.  A \emph{local high degree vertex} is a vertex with degree at least $E/8$; there are  at most $16$ local high degree nodes. For each local high degree vertex $v$, the algorithm enumerates all triangles containing $v$ with the subroutine in Lemma~\ref{lem:sortHD} (using any efficient cache-oblivious sorting algorithm, e.g.,~the one from~\cite{FrigoLPR12}). Local high degree vertices and their edges are then removed.

\item  A new coloring $\xi': V \rightarrow \mathbb{Z}$ is defined by adding a random bit to the value returned by $\xi$ in the least significant position of the binary representation. Specifically, let $\xi'(v) = 2 \xi(v) - b(v)$, where $b: V \rightarrow \{0,1\}$ is chosen uniformly at random from a 4-wise independent family of functions.

\item  The remaining triangles that satisfy $(c_0,c_1,c_2)$  under  coloring $\xi$ are enumerated by recursively solving 8 subproblems. 
For each color vector $\zeta\in \{2c_0 - 1, 2c_0\} \times \{2c_1 - 1, 2c_1\} \times \{2c_2 - 1, 2c_2\}$,
we recursively solve the $\zeta$-enumeration problem with coloring $\xi'$ on the graph obtained by removing edges incompatible with the color vector.
\end{enumerate}

The recursion ends when $E$ is empty, or at depth $\log_{4} E$: 
in the first base case there are no triangles; in the second base case, triangles are enumerated with the deterministic algorithm by~ Dementiev~\cite{Dementiev2007}, which relies on sort and scan operations, and can be trivially made oblivious using any oblivious sorting algorithm.
We note that step 1 has an effect also in the recursive calls, since $E$ refers to the number of edges compatible with the given subproblem.
In fact, this is the main conceptual difference between our algorithm and the algorithm in~\cite{JafargholiV13}.

We observe that at the recursive level $i=\log c$, with $c =\sqrt{E/M}$, the behavior of the algorithm is similar to the one of the cache-aware algorithm presented in Section~\ref{sec:aware}:
There are $c$ colors
and, as we will see below, when $i = \log c$ each vertex with degree at least $\sqrt{EM}$ is expected to be removed, and there are $(E/M)^{3/2}$ subproblems, each of expected size~$M$.

\subsection{Analysis}

\subsubsection{Correctness}
We argue that all proper triangles with coloring $(c_0,c_1,c_2)$ are correctly enumerated. 
Indeed, proper triangles with a local high degree vertex $v$ are found in step 1, and cannot appear again since edges adjacent to $v$ are subsequently removed. 
The remaining triangles are enumerated in the subproblems. 
Indeed, each proper triangle is given a coloring in $\{2c_0 - 1, 2c_0\} \times \{2c_1 - 1, 2c_1\} \times \{2c_2 - 1, 2c_2\}$ under $\xi'$, and there is exactly one recursive call reporting each triangle.

\subsubsection{I/O Complexity}
Suppose the $8^i$ subproblems at level $i$, with $0\leq i \leq \log_{4} E$, are arbitrarily numbered. We denote by $E_{i,j}$, $\xi_{i,j}$, $(c^0_{i,j}, c^1_{i,j}, c^2_{i,j})$ the input edge set, the coloring, and the triplet defining proper triangles, respectively, of the $j$th subproblem at 
level $i$,  for any $0\leq i \leq \log_{4} E$ and $0\leq j <8^i$.  We then define $E_{i,j}^{k,l}$, for any $0\leq k <l\leq 2$, as the set containing each edge $\{u,v\} \in E_{i,j}$, with $u<v$, such that 
$\xi_{i,j}(u)=c^k_{i,j}$ and $\xi_{i,j}(v)=c^l_{i,j}$. With a slight abuse of notation, we let $E_{i,j}$ and $E_{i,j}^{k,l}$ also denote the size of the respective sets. Since there are no incompatible edges, we have $E_{i,j}\leq E_{i,j}^{0,1} + E_{i,j}^{1,2} + E_{i,j}^{0,2}$.

In order to upper bound the expected I/O complexity of our algorithm we introduce two lemmas. 
Lemma~\ref{lem:exp} gives an upper bound on the expected value and variance of each subproblem at a given recursive level. 
Then, Lemma~\ref{lem:numsub} uses these bounds to limit the probability that a subproblem is larger than the expected size.

\begin{lemma}\label{lem:exp}
For any $0\leq i\leq \log_{4} E$ and $0\leq j < 8^i$, we have
$$\E{E_{i,j}^{0,1}}\leq \frac{E}{4^i}, \qquad \V{E_{i,j}^{0,1}}\leq \frac{3E^2}{16^i}.$$
The same bounds apply to $E_{i,j}^{1,2}$ and $E_{i,j}^{0,2}$.
\end{lemma}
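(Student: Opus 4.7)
My plan is to prove both bounds simultaneously by induction on the recursion depth $i$, carrying the induction jointly over all three color pairs $(0,1)$, $(0,2)$, $(1,2)$; doing so is necessary because the three class sizes become coupled in the variance analysis. The base case $i=0$ is immediate, since $|E_{0,0}^{k,l}|\leq E$ deterministically and the variance is zero.

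For the inductive step at level $i\geq 1$, I would let $(i-1,j')$ denote the parent of $(i,j)$ in the recursion tree and write $F_{0,1}$ for the edges of color $(c^0_{i-1,j'},c^1_{i-1,j'})$ in the parent's graph after its step 1. Step 1 ensures that every surviving vertex has degree strictly below $|E_{i-1,j'}|/8$. Subproblem $(i,j)$ keeps those edges of $F_{0,1}$ whose endpoints receive a specific bit pattern under the parent's random function $b$, so
$$|E_{i,j}^{0,1}|=\sum_{e=(u,v)\in F_{0,1}} W_e,\qquad W_e = \mathbf{1}[b(u)=a_0,\, b(v)=a_1],$$
for an $(a_0,a_1)\in\{0,1\}^2$ determined by $(c^0_{i,j},c^1_{i,j})$. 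The main probabilistic tool will be 4-wise independence of $b$.

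For the expectation, 2-wise independence of $b$ yields $\EC{|E_{i,j}^{0,1}|}{F_{0,1}}=|F_{0,1}|/4\leq |E_{i-1,j'}^{0,1}|/4$, and taking outer expectation together with the inductive hypothesis gives $\E{|E_{i,j}^{0,1}|}\leq E/4^i$. For the variance I would apply the law of total variance,
$$\V{|E_{i,j}^{0,1}|}=\E{\VC{|E_{i,j}^{0,1}|}{F_{0,1}}}+\V{\EC{|E_{i,j}^{0,1}|}{F_{0,1}}}.$$
The conditional variance expands over pairs of edges of $F_{0,1}$: 4-wise independence annihilates vertex-disjoint pairs, diagonal terms contribute $3/16$ each, and pairs of distinct edges sharing a vertex contribute $1/16$ each. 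The parent's step 1 degree guarantee caps the number of ordered shared-vertex pairs by $2|F_{0,1}|\cdot\max_v\deg_{F_{0,1}}(v)\leq |F_{0,1}|\cdot|E_{i-1,j'}|/4$. The outer variance term is $\V{|F_{0,1}|}/16$, which I would bound via $\V{|F_{0,1}|}\leq\E{|F_{0,1}|^2}\leq\E{|E_{i-1,j'}^{0,1}|^2}$ and then express as the inductive variance plus the squared inductive expectation.

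The main obstacle I foresee is taming the cross-moment $\E{|F_{0,1}|\cdot|E_{i-1,j'}|}$ that appears after taking the outer expectation of the shared-vertex contribution; a crude bound inflates the constant. I would handle it with Cauchy--Schwarz together with the decomposition $|E_{i-1,j'}|\leq|E_{i-1,j'}^{0,1}|+|E_{i-1,j'}^{0,2}|+|E_{i-1,j'}^{1,2}|$, so that the second moment of each color class is controlled by the inductive hypothesis on that class. This coupling is precisely why the induction must be run jointly on all three color pairs, and why the lemma states the same bound for each. The bounds on $E_{i,j}^{1,2}$ and $E_{i,j}^{0,2}$ then follow from the identical calculation with the three color classes permuted.
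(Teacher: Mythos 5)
Your overall architecture --- induction on the level $i$, indicator variables for edge survival, $4$-wise independence, splitting edge pairs into diagonal, shared-vertex and vertex-disjoint classes, and the degree cap from step~1 --- matches the paper's, and your treatment of the expectation is fine. The variance step, however, has a genuine gap: with the bounds you propose, the induction does not close. Write $F=|F_{0,1}|$ and suppose inductively that $\V{F}\leq C_{i-1}E^2/16^{i-1}$ up to lower-order terms. Your outer term $\V{\EC{|E_{i,j}^{0,1}|}{F}}=\V{F}/16$, bounded via $\V{F}\leq\E{F^2}=\V{F}+\E{F}^2\leq (C_{i-1}+1)E^2/16^{i-1}$, already contributes $(C_{i-1}+1)E^2/16^{i}$, exhausting the budget $C_{i-1}E^2/16^i$ before the (strictly positive) shared-vertex contribution $\E{N_s}/16\leq\E{F^2}/64$ is added. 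Even granting you the tighter $\V{F}\leq\V{E_{i-1,j'}^{0,1}}$, the recursion for the leading constant is at best $C_i\geq(5C_{i-1}+1)/4$, so $C_i$ grows geometrically in $i$; over the $\log_4 E$ levels this is an $E^{\Theta(1)}$ factor, not the constant $3$ the lemma asserts. The Cauchy--Schwarz/joint-induction device you describe only tames the cross-moment inside the shared-vertex count; it does not touch this divergence.

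The missing ingredient is a second, exact induction on the expected number of vertex-disjoint pairs, $W_{i,j}=\sum_{e\neq e',\,e\cap e'=\emptyset}Y_eY_{e'}$: a vertex-disjoint pair survives a level with probability exactly $1/16$, so $\E{W_{i,j}}\leq E^2/16^i$ with no variance inflation. The paper accordingly expands $\E{(E_{i,j}^{0,1})^2}$ directly as a diagonal term plus a shared-pair term plus $\E{W_{i,j}}$, and lets $\E{W_{i,j}}\leq E^2/16^i$ cancel against $-\E{E_{i,j}^{0,1}}^2=-E^2/16^i$; only the shared-vertex term then feeds the recursion, which becomes $C_i=(C_{i-1}+1)/2$ and converges. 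Two further devices support this cancellation and are absent from your sketch: the paper replaces each removed local high-degree vertex by degree-one copies so that no edges disappear between levels (hence $\E{E_{i,j}^{0,1}}$ equals $E/4^i$ exactly, and the inductive hypothesis applies to the parent's full color class rather than to your post-removal subset $F_{0,1}$), and it bounds $\E{W_{i,j}}$ under the worst case that no two input edges share a vertex. If you keep the law-of-total-variance framing, you must at least replace the bound $\V{F}\leq\E{F^2}$ by this finer accounting of the disjoint-pair count.
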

\begin{proof}
For the sake of the analysis we do not remove local high degree vertices in step 1, but replace them with vertices of degree one. 
Specifically, for any removed vertex $v$ with (local) degree $\text{deg}(v)$, we replace it with $\text{deg}(v)$ new vertices $v_i$ of degree $1$, and replace each edge $\{v,u\}$ with $\{v_i,u\}$ for a suitable $i$. 
This assumption simplifies the analysis since no edges are removed in a recursive level. 
However, correctness is not affected since the new vertices will not be involved in any proper triangle enumerated in recursive calls as they have degree one. 
By symmetry we may focus on $E_{i,j}^{0,1}$, the proofs for $E_{i,j}^{1,2}$ and $E_{i,j}^{0,2}$ being analogous.

We now prove by induction that, at any recursive level $0\leq i\leq \log_4 E$, we have
$\E{E_{i,j}^{0,1}}= {X}/{4^i}$ and $\V{E_{i,j}^{0,1}}\leq {X^2}/{16^i} + 2{X}/{4^i}$, where $X=E_{0,0}^{0,1}=E$. The lemma then follows since $X=E$ and ${X^2}/{16^i} + 2{X}/{4^i}\leq 3{X^2}/{16^i}$ as soon as $i\leq \log_4 X$.
The claim is trivially verified when $i=0$ since we get $\E{E_{0,0}^{0,1}}=E_{0,0}^{0,1}=X$, and $\V{E_{0,0}^{0,1}}=0$.  

Now consider a subproblem $j$ at level $i>0$ and its parent problem $j'$ at level $i-1$. 
By the inductive hypothesis, we have for the parent problem that $\E{E_{i-1,j'}^{0,1}} = X/4^{i-1}$ and $\V{E_{i-1,j'}^{0,1}}\leq  X^2/16^{i-1} + 2 X/4^{i-1}$. 
Assign to each edge $e\in E_{i-1,j'}^{0,1}$ a random variable $Y_e$ equal to one if $e\in E_{i,j}^{0,1}$ and $0$ otherwise. 
By conditioning on the number of edges in the parent problem, we get
\begin{align*}
\E{E_{i,j}^{0,1}} = \E{\EC{\sum_{e\in E_{i-1,j'}^{0,1}} Y_e}{E_{i-1,j'}^{0,1}}} =\frac{\E{E_{i-1,j'}^{0,1}}}{4} =  \frac{X}{4^i}
\end{align*}
since each edge in $E_{i-1,j'}^{0,1}$ is in $E_{i,j}^{0,1}$ with probability $1/4$. The first claim follows. 

Now consider the variance. We have
\begin{align}
\V{E_{i,j}^{0,1}} &= \E{(E_{i,j}^{0,1})^2}- \E{E_{i,j}^{0,1}}^2 = \E{\EC{(E_{i,j}^{0,1})^2}{E_{i-1,j'}^{0,1}}}- \E{E_{i,j}^{0,1}}^2.\label{eq:var}
\end{align}
The conditional expectation $\EC{(E_{i,j}^{0,1})^2}{E_{i-1,j'}^{0,1}}$ can be computed as follows. Since we have $E_{i,j}^{0,1} = \sum_{e\in E_{i-1,j'}^{0,1}} Y_e$, it follows that
$$
\left(E_{i,j}^{0,1}\right)^2=
\sum_{e\in E_{i-1,j'}^{0,1}} Y_e^2  + \sum_{\substack{e, e'\in E_{i-1,j'}^{0,1}\\ e\neq e', \vert e\cap e'\vert=1}} Y_e Y_{e'} + \sum_{\substack{e, e'\in E_{i-1,j'}^{0,1}\\ e\neq e', e\cap e'=\emptyset}}  Y_e Y_{e'}.
$$
Let $W_{i,j}^{0,1}=\sum_{e, e'\in E_{i-1,j'}^{0,1}, e\neq e', e\cap e'=\emptyset}  Y_e Y_{e'}$;
note that $W_{i,j}^{0,1}$ denotes the number of edge pairs that do not share any vertex in the $j$-th subproblem at level $i$.
Two edges sharing a vertex (i.e., $\vert e \cap e'\vert =1$) are in $E_{i,j}^{0,1}$ with probability $1/8$. Also, there are at most $2(E_{i-1,j'}^{0,1})^2/8$  pairs of edges that share exactly one vertex, since the maximum degree after step 2 is $E_{i-1,j'}^{0,1}/8$. Then the conditional expectation becomes
\begin{align}
&\EC{(E_{i,j}^{0,1})^2}{E_{i-1,j'}^{0,1}}=\frac{E_{i-1,j'}^{0,1}}{4} + \frac{(E_{i-1,j'}^{0,1})^2}{32} + \EC{W_{i,j}^{0,1}}{E_{i-1,j'}^{0,1}}\label{eq:condeq},
\end{align}
We then take  the expectation of~(\ref{eq:condeq}):
\begin{align*}
&\E{\EC{(E_{i,j}^{0,1})^2}{E_{i-1,j'}^{0,1}}}= \\
&=\frac{\E{E_{i-1,j'}^{0,1}}}{4} + \frac{\E{(E_{i-1,j'}^{0,1})^2}}{32} + \E{W_{i,j}^{0,1}}\\
&=\frac{\E{E_{i-1,j'}^{0,1}}}{4} + \frac{\V{E_{i-1,j'}^{0,1}}+ \E{(E_{i-1,j'}^{0,1})}^2}{32}+ \E{W_{i,j}^{0,1}}.
\end{align*}
By the inductive hypotheses on expectation and variance at level $i-1$ it follows that
\begin{align*}
\E{\EC{(E_{i,j}^{0,1})^2}{E_{i-1,j'}^{0,1}}}\leq \frac{X^2}{16^{i}}+\frac{5X}{4^{i+1}} +  \E{W_{i,j}^{0,1}}.
\end{align*}
The term $\E{W_{i,j}^{0,1}}$ can be upper bounded assuming that no two input edges share a vertex in $E_{0,0}^{0,1}$. This  gives an upper bound since a vertex shared by two edges cannot increase $W_{i,j}^{0,1}$. 
By induction it follows that $\E{W_{i,j}^{0,1}}\leq X^2/16^i$: 
indeed, an edge pair in $E_{i-1,j'}^{0,1}$ is also in $E_{i,j}^{0,1}$ with probability $1/16$. Then we get
\begin{align*}
\E{\EC{(E_{i,j}^{0,1})^2}{E_{i-1,j'}^{0,1}}}
\leq \frac{2X^2}{16^i}+\frac{5X}{4^{i+1}}.
\end{align*}
Finally, by~(\ref{eq:var}), we get that the variance at level $i$ is:
\begin{align*}
\V{E_{i,j}^{0,1}} \leq \frac{2X^2}{16^i}+\frac{5X}{4^{i+1}}-\frac{X^2}{16^{i}} \leq  \frac{X^2}{16^i}+\frac{2X}{4^{i}}
\end{align*}
and the claim follows. 
\end{proof}

\begin{lemma}\label{lem:numsub}
For any $0\leq i\leq \log_{4} E$, $0\leq j < 8^i$ and $0\leq k < \log_{4} E-i$, we have that
$$
\Pr{E_{i,j}\geq 9 \frac{E}{4^{i-k}}} \leq  {1}/{16^k}.
$$
\end{lemma}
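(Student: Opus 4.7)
The natural route is to apply Chebyshev's inequality to $E_{i,j}$, after bounding its mean and variance via Lemma~\ref{lem:exp}.

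First I would decompose $E_{i,j} \le E_{i,j}^{0,1} + E_{i,j}^{1,2} + E_{i,j}^{0,2}$ (by the assumption that no incompatible edges remain) and use linearity of expectation together with Lemma~\ref{lem:exp} to conclude $\E{E_{i,j}} \le 3E/4^i$. For the variance, I cannot claim independence of the three summands, so I would control the covariances by Cauchy--Schwarz: $|\CV{X,Y}| \le \sqrt{\V{X}\V{Y}} \le (\V{X}+\V{Y})/2$. Summing all six covariance terms together with the three variance terms gives $\V{E_{i,j}} \le 3\bigl(\V{E_{i,j}^{0,1}} + \V{E_{i,j}^{1,2}} + \V{E_{i,j}^{0,2}}\bigr) \le 27\,E^2/16^i$, again by Lemma~\ref{lem:exp}.

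Next I would apply Chebyshev's inequality with deviation $t = 9E/4^{i-k} - 3E/4^i = 3(3\cdot 4^k - 1)\,E/4^i$ (which is positive for all $k\ge 0$, so is a legitimate deviation above the mean), obtaining
\[
\Pr{E_{i,j} \ge 9E/4^{i-k}} \;\le\; \frac{\V{E_{i,j}}}{t^2} \;\le\; \frac{27\,E^2/16^i}{9(3\cdot 4^k-1)^2\,E^2/16^i} \;=\; \frac{3}{(3\cdot 4^k-1)^2}.
\]

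It then remains to check the elementary inequality $(3\cdot 4^k-1)^2 \ge 3\cdot 16^k$ for every $k \ge 0$. Expanding, this is equivalent to $9\cdot 16^k - 6\cdot 4^k + 1 \ge 3\cdot 16^k$, i.e., $6\cdot 16^k \ge 6\cdot 4^k - 1$, which holds trivially since $16^k \ge 4^k$. Substituting back yields the claimed bound $1/16^k$.

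The only delicate point is the variance bound for the sum: a union bound over the three color pairs would lose an extra factor of $3$ and fall short by exactly that constant, so it is important to apply Chebyshev to $E_{i,j}$ directly and pay the Cauchy--Schwarz slack of $3$ inside $\V{E_{i,j}}$, which is exactly absorbed by the $(3\cdot 4^k-1)^2$ in the denominator. The rest is routine algebra.
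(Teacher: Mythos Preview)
Your argument is correct, with one small notational caveat: since $E_{i,j}\le S:=E_{i,j}^{0,1}+E_{i,j}^{1,2}+E_{i,j}^{0,2}$ is only an inequality (strict when colors coincide), you have not actually bounded $\V{E_{i,j}}$---your covariance computation bounds $\V{S}$. The fix is immediate: apply Chebyshev to $S$ rather than to $E_{i,j}$, and then use $\Pr{E_{i,j}\ge\theta}\le\Pr{S\ge\theta}$; all of your displayed estimates are already the right ones for $S$, so nothing else changes.

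The paper takes the union-bound route you anticipate: from $E_{i,j}\ge\beta E/4^{i-k}$ it pigeonholes to some $E_{i,j}^{k,l}\ge(\beta/3)E/4^{i-k}$, applies Chebyshev to a single color-pair set with the cruder deviation $(\beta/3-1)E/4^{i-k}$, and obtains $27/((\beta-3)^2 16^k)$ per term before setting $\beta=9$. Your suspicion that this loses a constant is well-founded: once the leading factor of $3$ is restored one gets $81/((\beta-3)^2 16^k)=2.25/16^k$ at $\beta=9$, so the paper's final step overshoots the stated $1/16^k$ by a small constant. Your Cauchy--Schwarz treatment of the covariances is exactly what is needed to hit the constant on the nose, so your argument is in fact slightly tighter; for the downstream I/O analysis, of course, either $O(1/16^k)$ bound suffices.
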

\begin{proof}
Since $E_{i,j}\leq E_{i,j}^{0,1}+E_{i,j}^{1,2}+E_{i,j}^{0,2}$, we clearly have $\Pr{E_{i,j}\geq \beta \frac{E}{4^{i-k}}}\leq 3 \Pr{E_{i,j}^{0,1}\geq  \frac{\beta}{3}\frac{E}{4^{i-k}}}$. 
 Lemma~\ref{lem:exp} gives $\E{E_{i,j}^{0,1}}\leq E/4^i$ and $\V{E_{i,j}^{0,1}}\leq 3E^2/(16)^i$. Then, by Chebyshev's inequality, we get
\begin{align*}
\Pr{E_{i,j}^{0,1}\geq \frac{\beta}{3} \frac{E}{4^{i-k}}} & \leq 
\Pr{ \left\vert E_{i,j}^{0,1} - \E{E_{i,j}^{0,1}} \right\vert \geq  \frac{({\beta}/{3}-1)E}{4^{i-k}}}\\
&\leq 9 \frac{\V{E_{i,j}^{0,1}}16^{i-k}}{(\beta-3)^2 E^2} \leq \frac{27}{(\beta-3)^2  16^k}.
\end{align*} 
By setting $\beta=9$ the lemma follows.
\end{proof}

We are now ready to prove the first result of the paper, repeated here for convenience.
\thcacheoblivious*
\begin{proof}
We first argue that the I/O complexity of subproblems with input size not larger than $M$ is asymptotically negligible. 
Consider a subproblem $x$ whose input size is smaller than $M$, but its parent $y$ has input size larger than $M$.
Since the data used by $x$ fits in memory, the I/O complexity for solving $x$ (including subproblems generated in $x$) is $\BO{M/B+1}$. On the other hand, in our analysis we assume that the I/O complexity of $y$ is $\BOM{M/B+1}$, and thus the cost for solving $x$ is asymptotically negligible. Since a problem with input larger than $M$ can have at most 8 child  subproblems,  we can ignore subproblems of size smaller than $M$ without affecting asymptotically the I/O complexity of our algorithm.

We now upper bound the I/O complexity without taking into account the cost of subproblems at level $\log_4 E$ which have a slightly different I/O complexity than a subproblem at level $i<\log_4 E$ --- we will later see how to bound this quantity.

Let $Y_{i,s}$ denote the number of subproblems at level $i$ with input size $(E/4^{s+1}, E/4^{s}]$, for any $0\leq i\leq \log_4 E$ and $0\leq s < \log_{4} (E/M)$. 
The cost of a subproblem of size $Y_{i,s}$ is dominated by the sorting in step 1, and  then we get:
\begin{align*}
Q(E,M,B)&=\BO{\sum_{i=0}^{\log_{4}{E}-1} \sum_{s=0}^{ \log_4 (E/M)} Y_{i,s}\text{sort}(E/4^s)} \\
&= \BO{\sum_{s=0}^{ \log_4 (E/M)} \sum_{i=0}^{\log_{4} E-1} Y_{i,s} \text{sort}(E/4^s)}
\end{align*}
Since there are at most $2 \cdot 8^s$ subproblems of size no larger than  $E/4^{s}$ at levels $0, \ldots, s$, we get
\begin{align*}
Q(E,M,B)= \BO{\sum_{s=0}^{ \log_4 (E/M)} \left(8^s+\sum_{i=s+1}^{\log_{4} E-1}  Y_{i,s} \right) \text{sort}(E/4^s)}.
\end{align*}
By Lemma~\ref{lem:numsub},
the probability that a subproblem at level $i>s$ has size at least $E/4^{s+1}$ is 
$$\Pr{E_{i,j}\geq {E}/{4^{s+1}}} \leq \Pr{E_{i,j}\geq {9 E}/{4^{s+3}}} \leq 1/16^{i-s-3} \enspace .$$
The expected number of subproblems  of size larger than $E/4^{s+1}$ at level $i>s$ is 
$8^{i}\frac{1}{16^{i-s-3}}=\BO{16^s/2^i}$, which means that
$$\E{Y_{i,s}}=\BO{16^s/2^i} \text{ and }\E{\sum_{i=s+1}^{\log_{4} E} Y_{i,j}}=\BO{8^s} \enspace .$$ 
It follows that the expected value of $Q(E,M,B)$ is
\begin{align*}
&\E{Q(E,M,B)} = \BO{\sum_{s=0}^{ \log_4 (E/M)} 2^s \frac{E \log (E/4^s)}{B \log M}} \\
&= \BO{\frac{E}{B\log M}\int_0^{\log_4 (E/M)+1} 2^x \log (E/4^x) dx}\\
&=  \BO{\frac{E}{B\log M} \left. \frac{2^x \left(\ln (E/4^x)+2\right) 
}{\ln^2 2}\right\vert _0^{\log_4 (E/M)+1}}\\
&=  \BO{\frac{E^{3/2}}{\sqrt{M}B}}.
\end{align*}
We now bound the  expected number of I/Os required for subproblems at level $i=\log_4 E$.
Since we are using the algorithm by Dementiev~\cite{Dementiev2007} for solving base cases, the cost of a subproblem with input size in the range $(E/4^{s+1}, E/4^{s}]$ is $\BO{\text{sort}\left((E/4^s)^{3/2}\right)}$  I/Os.
This means that the number $Q'(E,M,B)$ of I/Os required by level $\log_4 E$ is:
\begin{align*}
Q'(E,M,B)= \BO{\sum_{s=0}^{\log_4 (E/M)} Y_{\log_4 E, s}\text{sort}\left((E/4^s)^{3/2}\right)}.
\end{align*}
By applying Lemma~\ref{lem:numsub} as before, we get that 
$$\E{Y_{\log_4 E, s}}=\BO{16^s/2^{\log_4 E}} \enspace .$$ 
Hence the expected value of $Q'(E,M,B)$ is
\begin{align*}
\E{Q'(E,M,B)}= \BO{\sum_{s=0}^{\log_4 (E/M)} 
2^{s}\frac{E  {\log (E/4^s)}}{B\log M}}
\end{align*}
which is $\BO{\frac{E^{3/2}}{\sqrt{M}B}}$ as shown before. The I/O complexity of the cache-oblivious algorithm follows by summing the expected values of $Q(E,M,B)$ and $Q'(E,M,B)$.

If the input of a subproblem is stored in a new location, the used space on disk is $\BO{E}$ in expectation since the expected size decreases geometrically. However, $\BO{E \log E}$ space is required in the worst case (i.e., when there exists only one partition containing all edges). The claimed $\BO{E}$ bound follows by noticing that no new space is required for storing subproblem input: before each recursive call, edges are sorted so that the subproblem input is stored in consecutive locations in the input of the parent problem. In this case, just  pointers to the initial and final positions are required for denoting the input. 
\end{proof}


\section{Derandomization} 
\label{sec:derand}
We now pursue a derandomization of the cache-aware algorithm in Section~\ref{sec:aware} via small-bias probability spaces.
More specifically, we need to find a balanced coloring $\xi$ such that $X_\xi = \BO{EM}$.
The idea of using this method to derandomize a triangle enumeration algorithm was previously used in~\cite{JafargholiV13}, though we present a more refined greedy approach that preserves the exponent $3/2$ of the algorithm.

For convenience we round up the number of colors $c$ to the nearest power of~2, which can only decrease $\E{X_\xi}$ for random $\xi$.
We split $X_\xi$ into two terms, $X_\xi = X^{\text{adj}}_\xi + X^{\text{nonadj}}_{\xi}$, where the two terms are the contributions in the sum defined in~(\ref{colldef}) from adjacent and non-adjacent edge pairs, respectively.

Our algorithm fixes one bit of the coloring at a time, aiming to approach the coloring guarantee of Lemma~\ref{lemma:expected}.
Formally we start with the constant coloring $\xi_0$ that assigns color~1 to every vertex.
For $i = 1,\dots,\log c$ we find a two-coloring $b_{i-1}: V\rightarrow \{0,1\}$ such that the coloring $\xi_i(v) = 2\xi_{i-1}(v)-b_{i-1}(v)$ satisfies 
\begin{align}\label{collbound}
\frac{4^i X^{\text{nonadj}}_{\xi_i}}{c^2} + \frac{2^i X^{\text{adj}}_{\xi_i}}{c} \leq (1+\alpha)^i EM \enspace .
\end{align}
Setting $\alpha = 1/\log c$ we have $(1+\alpha)^{\log c} < e$ so for the final coloring $\xi = \xi_{\log c}$, since $c\geq \sqrt{E/M}$ we get
$X_{\xi_i} =  X^{\text{nonadj}}_{\xi_i} + X^{\text{adj}}_{\xi_i} < e EM$.

It remains to be shown how we select $\xi_i$ to ensure~(\ref{collbound}) for $i=0,\dots,\log c$.
For $i=0$ we have $X^{\text{nonadj}}_{\xi_0} < E^2/2$ and $X^{\text{adj}}_{\xi_0} < E \sqrt{EM} /2$, and inserting $c\geq \sqrt{E/M}$ the claim follows.
The function $b_{i-1}$ used for constructing $\xi_i$ for $i>0$ will be taken from an almost 4-wise independent sample space.
We use the following known result:
\begin{lemma} (\cite[Theorem 2]{AGHP92}.) \label{lemma:smallbias}
For any $\alpha>0$ there is a set of $t = \BO{(\log(V) / \alpha)^2}$ functions $\beta_1,\dots,\beta_t: V\rightarrow \{0,1\}$ such that: For every four vertices $v_1,v_2,v_3,v_4$ and each vector $x\in \{0,1\}^4$ the set $\{ \beta_j \; | \; (\beta_j(v_1),\beta_j(v_2),\beta_j(v_3),\beta_j(v_4))=x \}$ has size at most $(1+\alpha) 2^{-4} t$. The space required for computing a value of $b_i$ is $\BO{\log(V/\alpha)}$ bits.
\end{lemma}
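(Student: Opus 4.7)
Since Lemma~\ref{lemma:smallbias} is quoted from AGHP92, the plan is to sketch the standard two-stage proof used in the small-bias / almost $k$-wise independence literature.

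\textbf{Stage 1 (reduce almost $4$-wise independence to small bias).} First I would show that it suffices to construct a family $\{\beta_1,\dots,\beta_t\}$ that is $\varepsilon$-biased in the sense that for every nonempty $S\subseteq V$ with $|S|\le 4$, the average of $(-1)^{\sum_{v\in S}\beta_j(v)}$ over uniform $j\in[t]$ has magnitude at most $\varepsilon$. Setting $\varepsilon=\alpha\cdot 2^{-4}$, a Fourier expansion on the Boolean cube expresses the indicator of any particular $4$-bit pattern $x\in\{0,1\}^4$ on four fixed vertices as a signed sum of $16$ parity characters, and by Parseval the deviation from $2^{-4}$ is bounded by the sum of $16$ Fourier coefficients, each of magnitude at most $\varepsilon$. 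This yields exactly the $\ell_\infty$ guarantee demanded by the lemma.

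\textbf{Stage 2 (construct a small $\varepsilon$-biased family).} I would use the algebraic construction from AGHP92: fix a prime $p=\mathrm{poly}(V/\alpha)$, identify the vertices of $V$ with elements of $\mathbb{F}_p$, and index each function by a pair $(x,y)\in \mathbb{F}_p^2$, setting $\beta_{x,y}(v)$ equal to a Boolean encoding (e.g.\ Legendre symbol, or inner-product bit) of $x\cdot g^v + y$ where $g$ generates $\mathbb{F}_p^\ast$. The bias on any test $S$ reduces to a character sum of length $|S|\le 4$ over $\mathbb{F}_p$. Applying the Weil bound bounds this sum by $\BO{\sqrt{p}\log p}$, so the bias is $\BO{\log p/\sqrt p}$; tuning $p$ gives the target bias. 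A seed $(x,y)$ occupies $\BO{\log(V/\alpha)}$ bits, evaluating $\beta_{x,y}(v)$ uses a constant number of $\mathbb{F}_p$ operations, and enumerating the family gives the claimed value of $t$.

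\textbf{Main obstacle.} The technical heart is the character-sum estimate in Stage 2: the Weil bound applies only once one has verified that the polynomial indexing the character (built from the distinct $v\in S$) is squarefree and that the induced character is non-trivial. Tracking these conditions for every $S$ of size up to $4$, and squeezing the constants so that the bias is below $\alpha/16$ with a prime of the target size, is where the argument becomes delicate. Stage 1 is a routine Parseval calculation and the evaluation time and storage bounds in the last sentence of the lemma follow immediately from the explicit form of $\beta_{x,y}$.
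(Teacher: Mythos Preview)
The paper does not prove this lemma at all: it is stated as a citation of \cite[Theorem~2]{AGHP92} and used as a black box in the derandomization of Section~\ref{sec:derand}. So there is no ``paper's own proof'' to compare against; your proposal goes well beyond what the authors do.

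That said, your two-stage outline is the right architecture for the AGHP92 result. Stage~1 is essentially the Vazirani XOR lemma and your Parseval calculation is correct (the deviation from $2^{-4}$ is bounded by $2^{-4}\cdot 15\cdot\varepsilon$, so $\varepsilon=\alpha/16$ suffices). Stage~2 is where your sketch is imprecise: the expression ``$x\cdot g^v+y$ followed by a Legendre symbol'' does not match any of the three explicit constructions in AGHP92. The quadratic-residue construction there takes a prime $p$ and indexes the sample point by $r\in\mathbb{F}_p$, outputting the bit string $(\chi(r),\chi(r+1),\dots,\chi(r+n-1))$ where $\chi$ is the quadratic character; the Weil bound then controls $\sum_r \chi\bigl(\prod_{i\in S}(r+i)\bigr)$, and squarefreeness of $\prod_{i\in S}(r+i)$ is automatic because the shifts $i$ are distinct. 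Your ``main obstacle'' about verifying squarefreeness is therefore not really an obstacle in the actual construction. The size and seed-length bounds you state are correct once the construction is fixed: identifying $V$ with $[n]$ for $n=\BO{\log V}$ and taking $p=\Theta((n/\varepsilon)^2)$ gives $t=\BO{(\log(V)/\alpha)^2}$ and seed length $\BO{\log(V/\alpha)}$ bits.
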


We now argue that if~(\ref{collbound}) holds for $\xi_{i-1}$ there exists a function $b_{i-1}$ from the sample space of Lemma~\ref{lemma:smallbias} such that~(\ref{collbound}) holds for $\xi_{i}$. 
To see this, consider the function 
$$\xi_{i}(v) = 2\xi_{i-1}(v) - b_{i-1}(v)$$
where $b_{i-1}$ is chosen at random from the family of Lemma~\ref{lemma:smallbias}.
Then $\E{X^{\text{nonadj}}_{\xi_{i}}} \leq X^{\text{nonadj}}_{\xi_{i-1}} (1+\alpha)/4$ because each pair contributing to $X^{\text{nonadj}}_{\xi_{i-1}}$ has probability at most $(1+\alpha)/4$ of colliding under $\xi_{i}$.
Similarly, $\E{X^{\text{adj}}_{\xi_{i}}} \leq X^{\text{adj}}_{\xi_{i-1}} (1+\alpha)/2$.
This means that
\begin{align*}
&\E{\frac{4^i X^{\text{nonadj}}_{\xi_{i}}}{c^2} + \frac{2^i X^{\text{adj}}_{\xi_{i}}}{c}} 
=
\frac{4^i \E{X^{\text{nonadj}}_{\xi_{i}}}}{c^2} + \frac{2^i \E{X^{\text{adj}}_{\xi_{i}}}}{c}\leq \\
& \hspace{1em}
 \leq (1+\alpha) \left(\frac{4^{i-1} X^{\text{nonadj}}_{\xi_{i-1}}}{c^2} + \frac{2^{i-1} X^{\text{adj}}_{\xi_{i-1}}}{c}\right) \leq  (1+\alpha)^i EM.
\end{align*}
So we conclude that there must exist a choice of $b_{i-1}$ for which~(\ref{collbound}) holds.

Finally, we need to argue that the right function $b_{i-1}$ can be chosen efficiently. 
To do this we maintain the list of edges sorted according to color class, such that all edges in
$$
E^{i-1}_{\tau_1,\tau_2} = \{ \{v_1,v_2\}\in E_l | v_1 < v_2,\, \xi_{i-1}(v_1)=\tau_1,\, \xi_{i-1}(v_2)=\tau_2\}$$
are stored consecutively. Since $X_{\xi_{i}} =  \sum_{\tau_1,\tau_2} \binom{E^{i}_{\tau_1,\tau_2}}{2}$, in a single scan of the edge list we can compute the value of~(\ref{collbound}) for every choice of $b_{i-1}$, using the assumption that $M$ is large enough to hold a constant number of variables for each function in internal memory. In particular, what is needed is keeping track of the number of edges of each color class $E^{i-1}_{\tau_1,\tau_2}$ that go into each of the four possible new color classes for those edges. We then select the function $b_{i-1}$ that minimizes~(\ref{collbound}), and split the edge set into new color classes in one additional scan. This concludes the description of our deterministic cache-aware algorithm for triangle enumeration:

\thdeterministic*
\begin{proof}
If $M > k \log^2 V  \log^2(E/M)$ for a sufficiently large constant $k$, we spend $\BO{E/B}$ I/Os for finding the best $b_i$, and then $\BO{\text{sort}(E)}$ I/Os to organized edges after fixing the coloring $\xi_i$, for each $i=1,\ldots \log c$. Thus the final balance coloring is computed in $\BO{E \log (E/M) /B}$ I/Os as soon as $M\geq E^\epsilon$. By mimicking the argument of Theorem~\ref{th:caware}, we get the claim since $X_\xi\leq eEM$.
\end{proof}


\section{Lower bound}\label{sec:lb}
In this section, we lower bound the I/O complexity of any algorithm for triangle enumeration. We restrict our attention to algorithms where each edge requires at least one memory word. That is, at any point in time there can be at most $M$ edges in memory, and an I/O can move at most $B$ edges to or from memory. This assumption is similar to the indivisibility assumption~\cite{ArgeM99} which is usually required for deriving lower bounds on the I/O complexity. The optimality of our algorithms follows from the following theorem since a clique of $\sqrt{E}$ vertices has $t=\BOM{E^{3/2}}$ triangles.

\thlowerbound*
\begin{proof}
In order to emit a triangle, information on the three nodes (or edges)  must reside in internal memory at some point in time. 
Since there are at most $M$ edges in internal memory, it follows from e.g.~\cite[Section 4.1]{AfratiSSU13} that no more than $\BO{M^{3/2}}$ distinct triangles can be emitted without doing any I/O.

Let $\mathcal{A}$ be any (possibly non-deterministic) algorithm for triangle enumeration. 
For the sake of the lower bound, consider the best execution $\mathcal{A}'$  of  algorithm $\mathcal{A}$ for a given input graph on a internal memory of size $M$ and block $B$. In other words, we consider the execution getting the smallest I/O complexity  $Q_{\mathcal{A}'}(E,M,B)$ for a given input: for instance, for a randomized algorithm we take the execution with the most favorable  choice of the random values.
Since, $\mathcal{A}'$ is an execution, all decisions that can be made by  algorithm $\mathcal{A}$ have already been taken.

We simulate the execution $\mathcal{A}'$ on an internal memory of size $2M$ in such a way that the computation advances in epochs, and blocks are read (resp., written) on disk only at the beginning (resp., end) of an epoch.  
The simulation works as follows.
We consider the internal memory to be divided into two non-overlapping parts $\mathcal M_0$ and $\mathcal M_1$ of size $M$, where $\mathcal M_0$ will be used to simulate the memory of size $M$ used by execution $\mathcal{A}'$, and $\mathcal M_1$ will be used for anticipating/delaying block reads/writes.
Specifically, each epoch simulates $M/B$ consecutive I/Os of $\mathcal{A}'$: the input blocks are prefetched and stored in $\mathcal M_1$ at the beginning of the epoch; the output blocks are temporary stored in $\mathcal M_1$ and then written on the external memory at the end of the epoch; the I/Os performed by  $\mathcal{A}'$  are then simulated by moving data between $\mathcal M_0$ and $\mathcal M_1$. By construction, we have that the I/O complexity of the simulation is $Q_{\mathcal{A}'}(E,M,B)$ and the I/O complexity of each epoch is $M/B$ (except the last epoch, which may use fewer I/Os).
In an epoch the processor touches at most $2M$ internal memory words and thus $\BO{M^{3/2}}$ distinct triangles can be emitted. Then we have
$$
Q_{\mathcal{A}'}(E,M,B) \geq \left \lfloor \frac{t}{\BO{M^{3/2}}}\right\rfloor \frac{M}{B}.
$$
Since $\BOM{t^{2/3}}$  edges are required for enumerating $t$ distinct triangles, we also have $Q_{\mathcal{A}'}(E,M,B) = \BOM{t^{2/3}/B}$ and the theorem follows.
\end{proof}

\section{Conclusion}\label{sec:concl}
In this paper we have investigated the I/O complexity of triangle enumeration in  external memory. In particular, we have described an optimal cache-oblivious algorithm requiring  $\BO{E^{3/2}/(\sqrt{M} B)}$ expected I/Os, which improves previous bounds by a factor $\min(\sqrt{E/M},\sqrt{M})$. 

Recently, it has been shown~\cite{Silvestri14} that the cache-aware randomized algorithm described in Section~\ref{sec:aware} can be extended to the enumeration of a given subgraph 
with $k$ vertices in the Alon class~\cite{AfratiSSU13} (which includes $k$-cliques) with $\BO{E^{k/2}/(M^{k/2-1}B)}$ expected I/Os if $k\geq 3$ is a constant. 
The algorithm decomposes the problem into $\BO{(E/M)^{k/2}}$ subproblems of expected size $\BO{M}$ using the random coloring  technique in Section~\ref{sec:aware}; each subproblem is then solved using an  extension of the algorithm in~\cite{HuTC13} that  enumerates all cliques of $k$ vertices in $\BO{E^{k-1}/(M^{k-2}B)}$ I/Os. 

An interesting open problem is to derive a triangle enumeration algorithm whose I/O complexity is sensitive to the number of triangles in the input graph.
Another direction is to extend to more general types of database queries, and consider for example cases of cyclic joins where the sizes of relations differ.
Recently Pagh and St{\"o}ckel~\cite{2014arXiv1403.3551P} made progress on I/O-efficient join algorithms that make duplicate-eliminating projections.
Extending their approach to other types of database queries is also an interesting direction.

\subsubsection*{Acknowledgments} The authors would like to thank Konstantin Kutzkov and Thomas Dueholm Hansen for discussions in the early stages of this work, Yufei Tao for providing us with a copy of the extended version of~\cite{HuTC13}, and the anonymous reviewers for useful comments.
This work was supported by the Danish National Research Foundation under the Sapere Aude program,  by MIUR of Italy under project AMANDA, and by the University of Padova under project CPDA121378.


\bibliographystyle{plain}

\bibliography{biblio}

\end{document}